\documentclass[a4paper]{article}

\bibliographystyle{plainurl}%

\usepackage[left=0.75in,right=0.75in,top=0.75in,bottom=0.75in]{geometry}

\usepackage{graphicx}
\usepackage{xspace}

\usepackage{hyperref}

\usepackage{amsmath}
\usepackage{amsfonts}
\usepackage{amssymb}

\usepackage{tabu}
\usepackage{multirow}
\usepackage{booktabs}
\usepackage{sidecap}
\usepackage[binary-units]{siunitx}

\usepackage{float}

\usepackage[thref,amsmath]{ntheorem}

\newtheorem{lemma}{Lemma}
\newtheorem{corollary}{Corollary}
\newtheorem{definition}{Definition}

\newtheorem{proposition}{Proposition}
\newtheorem{example}{Example}
\newtheorem{observation}{Observation}
\newtheorem{problem}{Problem}

\theoremstyle{nonumberplain}
\theoremheaderfont{\itshape}
\theorembodyfont{\normalfont}
\theoremseparator{.}
\theoremsymbol{}
\newtheorem{proof}{Proof}

\providecommand{\keywords}[1]{\textbf{\textit{Keywords ---}} #1}

\usepackage{authblk}

\newbox{\myorcidaffilbox}
\sbox{\myorcidaffilbox}{\large\includegraphics[height=1.7ex]{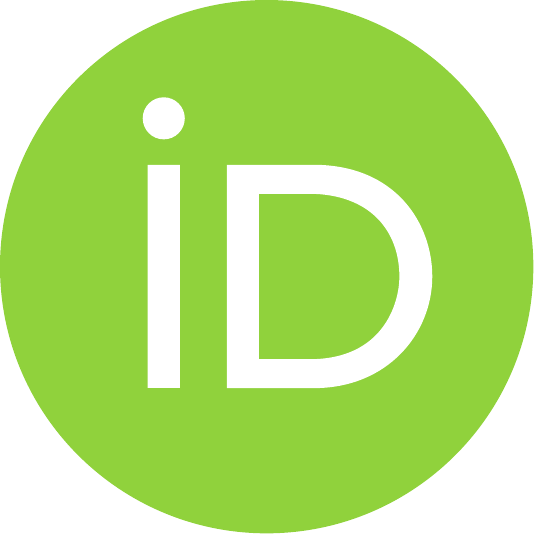}}
\newcommand{\orcid}[1]{%
	\href{https://orcid.org/#1}{\usebox{\myorcidaffilbox}}}

\usepackage{microtype}

\usepackage{tikz}
\usetikzlibrary{positioning,shapes,chains,backgrounds,arrows,chains,fit,snakes,calc,automata}

\usepackage{forest}

\usepackage[linesnumbered,vlined,ruled]{algorithm2e}

\usepackage[T1]{fontenc}
\usepackage{datetime}
\shortdate

\usepackage{comment}

\usepackage{subcaption}

\newcommand{\Oh}{\mathcal{O}}

\newcommand{\depth}{\textit{td}}
\newcommand{\stringdepth}{\textit{sd}}
\newcommand{\locus}{\textit{loc}}

\newcommand{\slink}{\textit{slink}}
\newcommand{\rev}{\textrm{rev}}
\newcommand{\Suf}{\textrm{Suf}}
\newcommand{\Pref}{\textrm{Pref}}
\newcommand{\leftmin}{\textit{Left-min}}
\newcommand{\sn}{\textit{ln}}

\newcommand{\parent}{\textit{parent}}
\newcommand{\child}{\textit{child}}

\renewcommand{\a}{\tt a}
\renewcommand{\b}{\tt b}
\renewcommand{\c}{\tt c}

\newcommand{\remove}[1]{}

\newcommand{\tikzmark}[1]{\tikz[overlay,remember picture] \node (#1) {};}
\newcommand{\DrawBox}[3][]{%
    \tikz[overlay,remember picture]{
    \draw[black,#1]
      ($(#2)+(-0.5em,2.0ex)$) rectangle
      ($(#3)+(0.75em,-0.75ex)$);}
}

\newlength\mylen
\newcolumntype{C}{>{\hfil$}p{\mylen}<{$\hfil}} %

\date{}
\begin{document}

\title{Pattern Discovery in Colored Strings}

\author[1]{Zsuzsanna Lipt\'{a}k\orcid{0000-0002-3233-0691}}
\author[2]{Simon J. Puglisi\orcid{0000-0001-7668-7636}}
\author[3]{Massimiliano Rossi\orcid{0000-0002-3012-1394}}

\affil[1]{Department of Computer Science, University of Verona, Verona, Italy}
\affil[2]{Helsinki Institute of Information Technology (HIIT), Department of Computer Science, University of Helsinki, Helsinki, Finland}
\affil[3]{Department of Computer and Information Science and Engineering, University of Florida, Gainesville, USA}
{
    \makeatletter
	\renewcommand\AB@affilsepx{, \protect\Affilfont}
	\makeatother
	\affil[1]{zsuzsanna.liptak@univr.it}
    \affil[2]{puglisi@cs.helsinki.fi}
    \affil[3]{rossi.m@ufl.edu}
}

\date{\bigskip \bigskip {\em Article published in ACM Journal of Experimental Algorithmics  (2020)} \\
doi: 10.1145/3429280}

\maketitle

\begin{abstract}
    In this paper, we consider the problem of identifying patterns of interest in colored strings. A colored string is a string where each position is assigned one of a finite set of colors. Our task is to find substrings of the colored string that always occur followed by the same color at the same distance.  The problem is motivated by applications in embedded systems verification, in particular, assertion mining. The goal there is to automatically find properties of the embedded system from the analysis of its simulation traces.

    We show that, in our setting, the number of patterns of interest is upper-bounded by $\Oh(n^2)$, where $n$ is the length of the string. We introduce a baseline algorithm, running in $\Oh(n^2)$ time, which identifies all patterns of interest satisfying certain minimality conditions, for all colors in the string. For the case where one is interested in patterns related to one color only, we also provide a second algorithm which runs in $\Oh(n^2\log n)$ time in the worst case but is faster than the baseline algorithm in practice. Both solutions use suffix trees, and the second algorithm also uses an appropriately defined priority queue, which allows us to reduce the number of computations. We performed an experimental evaluation of the proposed approaches over both synthetic and real-world datasets, and found that the second algorithm outperforms the first algorithm on all simulated data, while on the real-world data, the performance varies between a slight slowdown (on half of the datasets) and a speedup by a factor of up to 11. \footnote{A preliminary version of this paper was presented at SEA2020~\cite{LPR20}}
    
    \noindent\keywords{property testing, suffix tree, pattern mining, efficient algorithm}
\end{abstract}
\section{Introduction}\label{sec:introduction}

In recent years, embedded systems have become increasingly pervasive and are becoming fundamental components of everyday life. In line with this, embedded systems are required to perform more and more demanding tasks, and in many circumstances, peoples' lives are now dependent on the correct functioning of these devices. This, in turn, has led to an increasingly complex design process for embedded systems, where a major design task is to evaluate and check the correctness of the functionality from the early stages of the development process. This functionality checking is usually done using {\em assertions} --- logic formulae expressed in temporal logic such as Linear Temporal Logic (LTL) or Computation Tree Logic (CTL) --- that provide a way to express desirable properties of the device. Assertions are typically written by hand by the designers, and it might take months to obtain a set of assertions that is small and effective (i.e. it covers all functionalities of the device)~\cite{foster2004assertion}. In order to help designers with the verification process, methodologies and tools have been developed which automatically generate assertions from simulation traces of an implementation of the device~\cite{liu2011automatic,vasudevan2010goldmine,danese2015automatic,danese2017team}. The objective is to provide a small set of assertions that cover all behaviors of the device, in order to extend the basic manually-defined set of assertions.

\medskip

A simulation trace can be viewed as a table that records, for every simulation instant $T$, the value assumed by the input and output ports of the device. Figure~\ref{tab:example of simulation trace} shows an example of a simulation trace of a device with three input ports ${\cal I} = \{i_1, i_2, i_3\}$ and two output ports ${\cal O} = \{o_1, o_2\}$. An assertion is a logic formula expressed in temporal logic that must remain true in the whole trace. The simplest assertions involve only conditions occurring at the same simulation instant. In the simulation trace in Figure~\ref{tab:example of simulation trace}, from the solid and dashed shaded boxes, we can assert that each time we have $i_1 = 1$, $i_2 = 0$, and $i_3 = 1$, then $o_1 = 1$ and $o_2 = 1$. On the other hand, we cannot assert that each time we have $i_1 = 1$, $i_2 = 1$, and $i_3 = 0$, then $o_1 = 1$ and $o_2 = 1$, because there is a counterexample in the simulation trace, namely at instant $T= 9$, where $o_1 = 0$ and $o_2 = 0$. Note that the assertions do not need to contain all input and output variables, e.g.\ we can assert that $i_1=0$ and $i_3=0$ implies $o_2=0$.

Among all possible types of assertions that can be expressed in temporal logic, an interesting one is given by chains of {\em next:} sequences of consecutive input values that, when provided to the device, uniquely determine their output, after a certain number of simulation instants. For example, in the simulation trace in Figure~\ref{tab:example of simulation trace}, we can assert that each time we have, for $(i_1,i_2,i_3)$, the values $(0,1,0),(1,1,0),(0,1,0)$ in consecutive simulation instants, then, three instants later, we will see $o_1 = 1$ and $o_2 = 0$.

\begin{figure}[h]
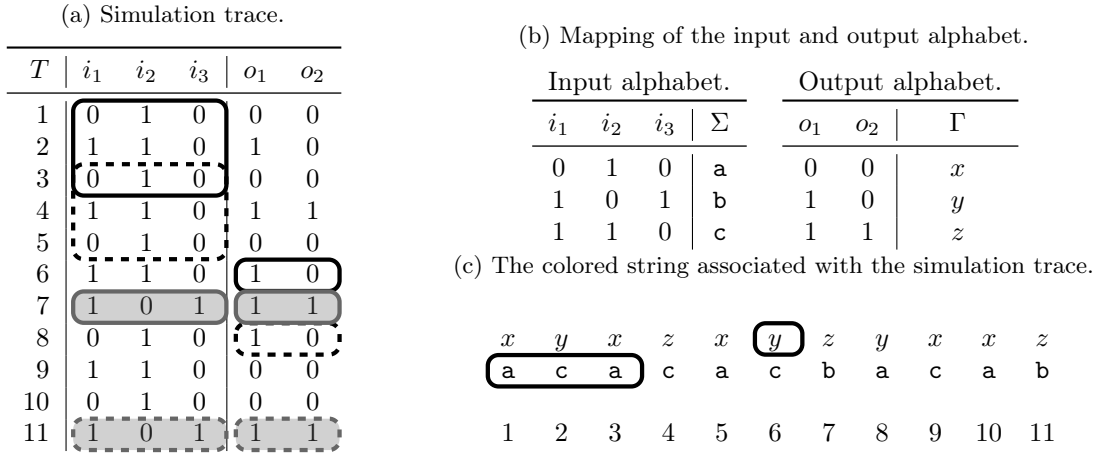

  \centering
  \begin{subfigure}[b]{.35\textwidth}
    \centering
    \caption{Simulation trace.\label{tab:example of simulation trace}}
    \begin{tabular}{r|ccc|cc}
        \toprule
        $T$ & $i_1$ & $i_2$ & $i_3$ & $o_1$ & $o_2$ \\
        \midrule
        1 & \tikzmark{top left 1}0 & 1 & 0 & 0 & 0 \\
        2 & 1 & 1 & 0 & 1 & 0 \\
        3 & \tikzmark{top left 3}0 & 1 & 0\tikzmark{bottom right 1}& 0 & 0 \\
        4 & 1 & 1 & 0 & 1 & 1 \\
        5 & 0 & 1 & 0\tikzmark{bottom right 3} & 0 & 0 \\
        6 & 1 & 1 & 0 & \tikzmark{top left 2}1 & 0\tikzmark{bottom right 2}\\
        7 & \tikzmark{top left 5}1 & 0 & 1\tikzmark{bottom right 5} & \tikzmark{top left 6}1 & 1\tikzmark{bottom right 6} \\
        8 & 0 & 1 & 0 & \tikzmark{top left 4}1 & 0\tikzmark{bottom right 4}\\
        9 & 1 & 1 & 0 & 0 & 0 \\
        10 & 0 & 1 & 0 & 0 & 0 \\
        11 & \tikzmark{top left 7}1 & 0 & 1\tikzmark{bottom right 7} & \tikzmark{top left 8}1 & 1\tikzmark{bottom right 8} \\
    \end{tabular}

      \DrawBox[ultra thick, rounded corners,  black]{top left 1}{bottom right 1}
      \DrawBox[ultra thick, rounded corners,  black]{top left 2}{bottom right 2}
      \DrawBox[ultra thick, dashed,  rounded corners,  black]{top left 3}{bottom right 3}
      \DrawBox[ultra thick, dashed,  rounded corners,  black]{top left 4}{bottom right 4}
      \DrawBox[ultra thick, fill=black!60, fill opacity=0.3,  rounded corners,  black!60]{top left 5}{bottom right 5}
      \DrawBox[ultra thick, fill=black!60, fill opacity=0.3,  rounded corners,  black!60]{top left 6}{bottom right 6}
      \DrawBox[ultra thick, dashed, fill=black!60, fill opacity=0.3,  rounded corners,  black!60]{top left 7}{bottom right 7}
      \DrawBox[ultra thick, dashed, fill=black!60, fill opacity=0.3,  rounded corners,  black!60]{top left 8}{bottom right 8}
  \end{subfigure}\hspace{0.5em}%
  \begin{subfigure}[b]{.55\textwidth}
    \centering
    \begin{subfigure}[b]{\textwidth}
      \centering
      \caption{Mapping of the input and output alphabet.\label{fig:mapping tables}}
      \begin{tabular}{ccc|c}
        \multicolumn{4}{c}{Input alphabet.}\\
        \toprule
        $i_1$ & $i_2$ & $i_3$ & $\Sigma$\\
        \midrule
        0 & 1 & 0 & $\a$\\
        1 & 0 & 1 & $\b$\\
        1 & 1 & 0 & $\c$\\

      \end{tabular}\hspace{1em}
      \begin{tabular}{cc|c}
        \multicolumn{3}{c}{Output alphabet.}\\
        \toprule
        $o_1$ & $o_2$ & $\Gamma$\\
        \midrule
        0 & 0 & $x$\\
        1 & 0 & $y$\\
        1 & 1 & $z$\\

      \end{tabular}
    \end{subfigure}\vfill%
    \begin{subfigure}[b]{\textwidth}
      \centering
      \caption{The colored string associated with the simulation trace.\label{fig:resulting colored string}}
      \begin{center}
      	\settowidth\mylen{$10$}
        $
        \begin{array}{*{11}{C}}
        x & y & x & z & x & \tikzmark{top left 12}y\tikzmark{bottom right 12} & \tikzmark{top left 16}z\tikzmark{bottom right 16} & \tikzmark{top left 14}y\tikzmark{bottom right 14} & x & x & \tikzmark{top left 18}z\tikzmark{bottom right 18}\\
        \tikzmark{top left 11}\a & \c & \tikzmark{top left 13}\a\tikzmark{bottom right 11} & \c & \a\tikzmark{bottom right 13} & \c & \tikzmark{top left 15}\b\tikzmark{bottom right 15} & \a & \c & \a & \tikzmark{top left 17}\b\tikzmark{bottom right 17} \\
      \\
        1 & 2 & 3 & 4 & 5 & 6 & 7 & 8 & 9 & 10 & 11
        \end{array}
        $

        \DrawBox[ultra thick, rounded corners,  black]{top left 11}{bottom right 11}
        \DrawBox[ultra thick, rounded corners,  black]{top left 12}{bottom right 12}
      \end{center}
    \end{subfigure}%
  \end{subfigure}%
  \caption{Example of a simulation trace of a device with input ports ${\cal I} = \{i_1, i_2, i_3\}$, and output ports ${\cal O} = \{o_1, o_2\}$. The mapping of the input and output values of the trace into the input and output alphabet respectively. The colored string associated to the simulation trace, after the mapping. The solid and dashed shaded and non-shaded boxed values in the simulation trace highlight that every time we see the sequence of input values, then we have the corresponding output value. The solid non-shaded boxed characters in the colored string are the mapping of the corresponding solid non-shaded boxed values in the simulation trace.}
\end{figure}

\medskip

We model simulation traces with {\em colored strings}. A colored string is a string over an alphabet $\Sigma$, where each position is additionally assigned a color from an alphabet $\Gamma$. We will set $\Sigma$ as the set of tuples of possible values for the input ports $i_1,\ldots,i_k$ and $\Gamma$ as that of the output traces $o_1,\ldots,o_r$. The objective then is to identify patterns in the string whose occurrence is always followed by the same color at some given distance.

\subsection{Related Work}

Pattern mining was originally motivated by the need to discover frequent itemsets and association rules in basket data, i.e.\ items that were frequently bought together in a retail store. The seminal {\em Apriori algorithm}~\cite{agrawal1994fast} can discover that type of pattern and has become very popular (with many extensions and variations) due to its wide applicability in other data-intensive domains. Time relationship, e.g., between entries of the database in which the basket data are stored, were later considered in so-called {\em sequential pattern mining}~\cite{agrawal1995mining}. 

In sequential pattern mining, {\em episodes} are partially ordered sequences of events that appear close to each other in the sequence~\cite{mannila1997discovery}. 
Given episodes of the sequence, it is possible to build {\em episode rules} that establish antecedent-consequent relations among episodes. Sequential pattern mining has many applications (see,e.g.,~\cite{cho2008tree,laxman2008stream,fahed2018deer}) and has been surveyed extensively~\cite{pei2007constraint, mabroukeh2010taxonomy, fournier2017survey}.

Unfortunately, the above setting is not applicable to our problem, since here time is given only in a relative sense, i.e., whether an event happens before (or after) another event, while we need to count exactly the instants that occurs between the two events.

In the {\em string mining problem}~\cite{fischer2006optimal,fischer2005fast,fischer2008space,dhaliwal2010practical,valimaki2012distributed}, one aims to discover strings that appear as a substring in more than $\omega$ strings in a collection, where $\omega$ is a user-defined parameter called {\em support} of the string.
This can be also used to find strings that discriminate between two collections, i.e., strings that are frequent in one collection and not frequent in the other. These strings are called {\em emerging strings} and find important applications in data mining~\cite{Raza2019}, knowledge discovery in databases~\cite{chan2003mining} and in bioinformatics~\cite{birzele2006new}. In the field of knowledge discovery on databases, the problem has been extended to mining frequent subsequences~\cite{iwanuma2005extracting} and distinguishing subsequence patterns with gap constraints~\cite{ji2007mining,pang2017mining,wang2014efficient,wu2013pmbc}.

In~\cite{Hui92} Hui proposed a solution for the {\em color set size problem}. Here, given a tree and a coloring of its leaves, the objective is to find, for all internal nodes of the tree the number of distinct colors in the leaves of its subtree. In~\cite{Hui92}, the color set size problem is applied to several string matching and string mining problems, e.g., given a collection of $m$ strings, find the longest pattern which appears in at least $1 \leq k \leq m$ strings. Note that if the tree of the color set size problem is the suffix tree of a string $s$, then $s$ with the coloring of its suffixes can be seen as a colored string. In spite of this similarity, both, the problems that we solve, and the approaches we use, are different.

In assertion mining, the two existing tools, {\em GoldMine}~\cite{vasudevan2010goldmine} and {\em A-Team}~\cite{danese2017team}, are based on data mining algorithms. In particular, {\em GoldMine}~\cite{vasudevan2010goldmine}
extracts assertions that predicate only on one instant of the simulation trace---i.e. they do not involve any notion of time---, using decision tree based mining or association mining~\cite{agrawal1994fast}. Furthermore, using static analysis techniques together with sequential pattern mining, it extracts temporal assertions.
The tool A-Team~\cite{danese2017team}, requires the user to provide the template of the temporal assertions that they want to extract. For example, in order to extract the properties of our example in Fig.~\ref{tab:example of simulation trace}, one needs to provide a template stating that we want a property of the form: ``a property $p_1$, at the next simulation instant a property $p_2$, at the next simulation instant a property $p_3$, then after three simulation instants a property $p_4$''.
Given a set of templates, the software, using an Apriori algorithm, extracts propositions (logic formulae containing the logical connectives $\neg$, $\vee$, and $\wedge$) from the trace. Once the propositions have been extracted, the tool generates the assertion by instantiating the extracted propositions in the templates, using a decision-tree-based algorithm to find formulas that fit in the template and are verified in the simulation trace, i.e.\ if the trace contains no counterexample.

\subsection{Our Contribution}

In this work we introduce colored strings, and propose and analyze two pattern discovery problems on colored strings which correspond to a useful simplification of pattern mining w.r.t.\ assertion mining. In both problems, we are given a colored string as input. Given a colored string and a color as input, in the first problem, we must find all minimal substrings that occur followed always at the same distance by the given color. In the second problem, the color is not fixed, thus we want to find all minimal substrings which occur followed always at the same distance by the same color. We define these problems formally in Section~\ref{sec:basics}.

\medskip

Although these problems are simpler than the original assertion mining problem, the solution to our problem contains all the information, possibly filtered, to recover the desired set of minimal assertions  in a second stage.  For example, let us assume that the device that produced the simulation trace in Figure~\ref{tab:example of simulation trace} has the following behavior: every time that $i_1 = 0$, at the next instant $i_1 = 1$, and at the next instant $i_1 = 0$, then after three instants $o_1 = 1$ and $o_2 = 0$.
A solution to our problem will include all patterns of length $3$ for which $i_1$ takes consecutive values $0,1,0$, while $i_2$ and $i_3$ have arbitrary values, since all of these will result in $o_1 = 1$ and $o_2 = 0$ three instances later.

\medskip

We first upper bound the number of minimal patterns by $\Oh(n^2)$. We then propose two algorithms which find all minimal patterns, when only one color is of interest ({\tt base}), and when one is interested in all colors ({\tt base-all}). Both of these algorithms use the suffix tree of the reverse string as underlying data structure. We note that since this is a pattern mining problem, every efficient algorithm for the problem will necessarily use a dedicated string data structure (or index), such as a suffix tree, since all occurrences of substrings have to be considered concurrently.

Then we show that, in the case of one color, the first algorithm can be improved. The new algorithm, referred to as {\tt skipping}, also uses the suffix tree as its underlying data structure, together with an appropriately defined priority queue. This allows us to reduce the number of computations in practice, even though the theoretical running time of the new algorithm is worse, namely $\Oh(n^2 \log n)$. We provide an experimental evaluation of the proposed approaches. Finally, we consider (practically motivated) restrictions on the patterns and show that under these restrictions performance of the {\tt skipping} algorithm is further improved. 

\medskip

The paper is structured as follows. In Section~\ref{sec:basics} we fix definitions and notations and give the problem statements. In Section~\ref{sec:baseline} we present baseline algorithms, {\tt base} and {\tt base-all}, which solve these problems. In Section~\ref{sec:skipping} we present the modified algorithm {\tt skipping}, which solves the pattern discovery problem for only one color. In Section~\ref{sec:output} we introduce real-world data oriented restrictions on the output. In Section~\ref{sec:experiments} we present an experimental evaluation of the proposed approaches. Conclusions and future work can be found in Section~\ref{sec:conclusion}.

\section{Basics}\label{sec:basics}

Let $\Sigma$ be a finite ordered set. We refer to $\Sigma$ as {\em alphabet} and to its elements as {\em characters}.
A {\em string over $\Sigma$} is a finite sequence of characters $S = S[1,n]$, where $|S|=n$ is the {\em length} of string $S$. We denote by $\varepsilon$ the {\em empty string}, the unique string of length $0$. Note that we number strings starting from 1, and we use the array-notation for strings: we denote the $i$'th character of $S$ by $S[i]$ and use $S[i,j]$ to refer to the string $S[i]\cdots S[j]$, if $i\leq j$, while $S[i,j] = \varepsilon$ if $i>j$. Given string $S = S[1,n]$, the {\em reverse string} is the string $S^{\rev}=S[n]S[n-1]\cdots S[1]$. For string $S$ and $1\leq i \leq n$, $\Pref_i(S) = S[1,i]$ is called the {\em $i$'th prefix} of $S$, and $\Suf_i(S) = S[i,n]$ is called the {\em $i$'th suffix of $S$}. A {\em substring} of a string $S$ is a string $T$ for which there exist $i,j$ s.t.\ $T = S[i,j]$; in this case the position $i$ is referred to as an {\em occurrence} of $T$ in $S$. A substring $T$ of $S$ is called {\em proper} if $T\neq S$. When $S$ is clear from the context, then we may refer to $T$ simply as a {\em substring}.

\subsection{Colored strings}\label{sec-colored-strings}

Given two finite sets $\Sigma$ (the alphabet) and $\Gamma$ (the colors), a {\em colored string} over $(\Sigma,\Gamma)$ is a string $S=S[1,n]$ over $\Sigma$ together with a coloring function $f_S: \{1,\ldots,n\} \to \Gamma$. We denote by $\sigma=|\Sigma|$ and $\gamma=|\Gamma|$ the number of characters resp.\ of colors. Given a colored string $S$ of length $n$, its reverse is denoted $S^{\rev}$, and its coloring function $f_{S^{\rev}}$ is defined by $f_{S^{\rev}}(i) = f_S(n-i+1)$, for $i=1,\ldots, n$. When $S$ is clear from the context, we write $f$ for $f_S$ and $f^{\rev}$ for $f_{S^{\rev}}$.

We are interested in those substrings which are always followed by a given color $y$, at a given distance $d$. Look at the following example.

\begin{example}\label{ex:ex1} Let $S = acacacbacab$, with colors $xyxzxyzyxxz$:

 \medskip

 \begin{center}
   	\settowidth\mylen{$10$}
    $
    \begin{array}{*{11}{C}}
    x & y & x & z & x & y & z & y & x & x & z\\
    \a & \c & \a & \c & \a & \c & \b & \a & \c & \a & \b \\
    \\
    1 & 2 & 3 & 4 & 5 & 6 & 7 & 8 & 9 & 10 & 11
    \end{array}
    $
  \end{center}

  \medskip

  The substring ${\tt aca}$ occurs 3 times in $S$, at positions $1,3,$ and $8$. In positions 1 and 3 it is followed by a $y$ at distance $3$, while at position 8, the corresponding position is beyond the end of the string.
\end{example}

This leads to the following definitions:

\begin{definition}[$y$-good, $y$-unique, minimal]
 	\label{def-y-unique}
 	Let $S$ be a colored string over $(\Sigma,\Gamma)$, $y\in \Gamma$ a color, $d \leq n$ a non-negative integer, and $T=T[1,m]$ a substring of $S$.

 	\begin{enumerate}
 		\item An occurrence $i$ of $T$ is called {\em $y$-good with delay $d$} (or {\em $(y,d)$-good}) if $f(i+m-1+d) = y$.
 		\item $T$ is called {\em $y$-unique with delay $d$} (or {\em $(y,d)$-unique}) if for every occurrence $i$ of $T$, $i$ is $(y,d)$-good or $i+m-1+d> n$.
 		\item $T$ is called {\em minimally $(y,d)$-unique} if there exists no proper substring $U$ of $T$ which is $y$-unique with delay $d'$, for some $d'$ s.t.\ $U = T[i,j]$ and $d' = d+ |T|-j$.
 	\end{enumerate}

\end{definition}

In the Example~\ref{ex:ex1}, the occurrence of {\tt aca} in position $1$ is $(y,3)$- and $(y,5)$-good, that in position $3$ is $(y,1)$- and $(y,3)$-good, while that in position $8$ is not $(y,d)$-good for any $d$. Therefore, the substring $T={\tt aca}$ is a $(y,3)$-unique substring of $S$, since every occurrence $i$ of ${\tt aca}$ is either $(y,3)$-good (position\ 1 and 3) or $i+m-1+d >  n$ (position 8). However, ${\tt aca}$ is not minimal, since its substring ${\tt ca}$ is also $(y,3)$-unique (where $d'=d$, since {\tt ca} is a suffix of ${\tt aca}$).\\

The introduction of minimally $(y,d)$-unique substrings serves to restrict the output size.
Let $T = aXb$ be $(y,d)$-unique, with $a,b \in \Sigma$ and $X\in \Sigma^*$. We call $T$ {\em left-minimal} if $Xb$ is not $(y,d)$-unique, and {\em right-minimal} if $aX$ is not $(y,d+1)$-unique. We make the following simple observations about $(y,d)$-unique substrings. (Note that {\em 2} is a special case of {\em 3}.)

 \begin{observation}\label{obs:1} Let $S\in \Sigma^*$ and let $T$ be a $(y,d)$-unique substring of $S$.
  \begin{enumerate}
    \item $T$ is minimal if and only if it is left- and right-minimal.
    \item If $T$ is a suffix of $T'$, then $T'$ is also $(y,d)$-unique.
    \item If $T' = UTV$ is a superstring of $T$ such that $|V|\leq d$, then $T'$ is $(y,d-|V|)$-unique.
  \end{enumerate}
 \end{observation}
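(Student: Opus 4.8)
The common engine behind all three items is that the uniqueness condition for a single occurrence depends only on one absolute position of $S$. If $T=T[1,m]$ occurs at position $i$, then whether that occurrence is $(y,d)$-good or lies beyond the end is decided entirely by the index $p=i+m-1+d$: it is good iff $f(p)=y$ and out of bounds iff $p>n$. So the plan is to track how $p$ behaves when characters are prepended or appended to $T$, and to adjust the delay so that $p$ is preserved; prepending leaves $p$ (and the delay) untouched, whereas appending $k$ characters forces the delay to drop by $k$ so that the same $p$ is examined.

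First I would prove item (3), since item (2) is its special case $V=\varepsilon$. Let $T'=UTV$ with $|V|\le d$, and set $k_1=|U|$, $k_2=|V|$, so $|T'|=k_1+m+k_2$. Each occurrence $i'$ of $T'$ embeds an occurrence $i=i'+k_1$ of $T$, and the key (routine) computation is
\[
 i'+|T'|-1+(d-k_2)=(i'+k_1)+m-1+d=i+m-1+d,
\]
which says the position examined by $i'$ under delay $d-k_2$ is exactly the one examined by the embedded occurrence $i$ under delay $d$. Since $|V|\le d$, the new delay $d-k_2$ is non-negative and hence legitimate. Thus $i'$ is $(y,d-k_2)$-good (resp.\ out of bounds) precisely when $i$ is $(y,d)$-good (resp.\ out of bounds); as $T$ is $(y,d)$-unique the latter always holds, so $T'$ is $(y,d-|V|)$-unique. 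Setting $V=\varepsilon$ immediately yields item (2) with the delay unchanged.

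For item (1), I would first record that, for $T=aXb$, left-minimality says $T[2,m]$ is not $(y,d)$-unique and right-minimality says $T[1,m-1]$ is not $(y,d{+}1)$-unique; these are exactly the two single-character reductions in the minimality definition, obtained with $U=T[2,m],\,d'=d$ and with $U=T[1,m-1],\,d'=d{+}1$. The forward direction is then immediate, since $T[2,m]$ and $T[1,m-1]$ are two of the proper substrings forbidden by minimality. For the converse I would argue by contraposition: suppose some proper nonempty substring $U=T[i,j]\ne T$ is $(y,d')$-unique with $d'=d+m-j$. If $i\ge 2$, then $U$ sits inside $T[2,m]=T[2,i{-}1]\,U\,T[j{+}1,m]$, and applying item (3) with right part of length $m-j\le d'$ shows $T[2,m]$ is $(y,d'-(m-j))=(y,d)$-unique, contradicting left-minimality. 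Otherwise $i=1$ and $j\le m-1$, so $U$ sits inside $T[1,m-1]=U\,T[j{+}1,m{-}1]$, and item (3) with right part of length $m-1-j$ shows $T[1,m-1]$ is $(y,d{+}1)$-unique, contradicting right-minimality. Since every proper $U$ falls into one of these two cases, $T$ is minimal.

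I expect the position computations to be entirely mechanical; the one place requiring care will be the delay bookkeeping in item (1): checking that the compensating delay $d'=d+m-j$ dictated by the minimality definition cancels exactly against the length of the appended suffix in each case, so that the two contradictions land on precisely the delays $d$ and $d{+}1$ named in the definitions of left- and right-minimality. I would also confirm that $m\ge 2$ is implicit in item (1) (so that $T[2,m]$ and $T[1,m-1]$ are genuine nonempty proper substrings) and dismiss the empty substring as a degenerate case, but neither affects the core argument.
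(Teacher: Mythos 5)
Your proof is correct. The paper states this as an observation without providing a proof, and your argument --- reducing everything to the invariant that an occurrence is judged solely by the position $i+m-1+d$, proving item (3) first by showing that prepending leaves this position fixed while appending $|V|$ characters is compensated by lowering the delay by $|V|$, and then deriving item (1) by applying (3) to embed an arbitrary proper substring $U=T[i,j]$ into $T[2,m]$ or $T[1,m-1]$ with the delay bookkeeping $d'=d+|T|-j$ cancelling exactly --- is precisely the routine verification the authors leave implicit, carried out completely and with the degenerate cases correctly dismissed.
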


We are now ready to formally state the problems treated in this paper.

\begin{problem}[Pattern Discovery Problem]
  Given a colored string $S$ and a color $y$, report all pairs $(T,d)$ such that $T$ is a minimally $(y,d)$-unique substring of $S$.
\end{problem}

\begin{problem}[Unrestricted-Output Pattern Discovery Problem]\label{prob:unrestricted}
  Given a colored string $S$, report all triples $(T,y,d)$ such that $T$ is a minimally $(y,d)$-unique substring of $S$.
\end{problem}

We next give an upper bound on the number of minimally $(y,d)$-unique substrings.

\begin{lemma}\label{lemma:UB}
Given string $S$ of length $n$, the number of minimally $(y,d)$-unique substrings of $S$, over all $y\in \Gamma$ and $d=0,\ldots,n$, is $\Oh(n^2)$.
\end{lemma}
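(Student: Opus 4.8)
The plan is to \emph{charge} each minimally $(y,d)$-unique substring to a pair of text positions and to show that this charging is injective; since there are only $\Oh(n^2)$ pairs of positions, the bound follows. The charging I would use is built from the leftmost occurrence of the pattern together with the position whose color the pattern predicts.

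Concretely, let $T$ be a minimally $(y,d)$-unique substring of length $m$, let $i_0$ be the starting position of its \emph{leftmost} occurrence, put $p = i_0 + m - 1$ (where this occurrence ends), and $t = p + d$ (the predicted position). I would send $(T,d) \mapsto (p,t)$. This pair already encodes the delay, since $d = t-p$, and in the generic case $t \le n$ it also encodes the color: because $i_0$ is then an in-range occurrence and $T$ is $(y,d)$-unique, we must have $y = f(t)$. Hence, for $t \le n$, the only thing left to verify is that $(p,t)$ recovers $T$ itself; if it does, the map is injective \emph{simultaneously over all colors and all delays}, giving the $\Oh(n^2)$ bound at once.

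The key step, on which I would spend the most care, is showing that $(p,t)$ (equivalently, $p$ together with $y$ and $d$) determines $T$. I would prove the sharper statement that, for fixed $y$ and $d$, at most one minimally $(y,d)$-unique substring has the end of its leftmost occurrence at $p$. Suppose two such substrings $T_1 \neq T_2$ existed; since both end their leftmost occurrence at $p$, the shorter one, say $T_2$, is a proper suffix of $T_1$, and in particular $T_2$ is a suffix of the string obtained from $T_1$ by deleting its first letter. By the second item of Observation~\ref{obs:1}, the $(y,d)$-uniqueness of $T_2$ then propagates to that longer suffix, i.e.\ $T_1$ with its first letter deleted is itself $(y,d)$-unique, contradicting the left-minimality of $T_1$. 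This rules out two distinct patterns sharing a pair $(p,t)$ and establishes injectivity whenever $t \le n$.

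The main obstacle is the \emph{degenerate} case $t = p + d > n$, in which every occurrence of $T$ is out of range: there $(y,d)$-uniqueness holds vacuously, the color $y$ is no longer pinned down by $f(t)$, and a single string can be minimal for many colors simultaneously. I would handle this separately, observing that the left-minimality argument above still shows at most one such pattern per pair $(p,d)$ for each fixed color; hence, for a fixed color, the total number of minimal patterns (degenerate and non-degenerate together) is at most the number of pairs $(p,d)$, namely $\Oh(n^2)$, and summing over the constantly many colors of the intended application preserves the bound. Alternatively, one notes that these vacuous patterns never actually predict a color and may be excluded from the output, after which the clean charging to pairs $(p,t)$ with $t \le n$ yields $\Oh(n^2)$ directly.
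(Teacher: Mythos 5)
Your proof is correct and follows essentially the same route as the paper's: charge each minimal pattern to an (ending position, delay) pair, observe that the predicted position pins down the color, and use Observation~\ref{obs:1}(2) to show that two distinct minimal patterns cannot share such a pair without violating left-minimality. Your explicit treatment of the degenerate case $p+d>n$ (which the paper's proof silently glosses over by assuming the color is determined by $f_S(j+d)$) is a welcome extra bit of care, even though it forces you to assume constantly many colors or to discard vacuous patterns there.
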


\begin{proof}
Note that, given a position $j$ and a delay $d$, every substring occurrence ending in $j$ is $(f_S(j+d),d)$-good. Therefore, for a substring $U$ with an occurrence ending in position $j$, and for fixed $d$, it holds that, if $U$ is $(y,d)$-unique for some $y$, then $y=f_S(j+d)$. Moreover, it follows from Observation~\ref{obs:1} that, given $y,d$, and $j$, at most one minimally $(y,d)$-unique substring can end at position $j$. Altogether we have that the number of minimally $(y,d)$-unique substrings is $\Oh(n^2)$, over all $y$ and $d$.
\end{proof}

\subsection{Suffix trees and suffix arrays}

Let $S$ be a string over $\Sigma$ and $\$$ a new character not belonging to $\Sigma$. We denote by ${\cal T}(S)$ the {\em suffix tree} of $S\$$, i.e.\ the compact trie of the suffixes of $S\$$. For a general introduction to suffix trees, see, e.g.,~\cite{g1997,s2003,MBCT2015}.
Here we recall some basic facts.

The suffix tree ${\cal T}(S)$ is a rooted tree in which all internal nodes are branching. Each edge is labeled with a non-empty substring of $S$ such that the labels of any two outgoing edges from the same node start with a different character. Edge labels are stored in form of two pointers $[i,j]$ into the string with the property that $S[i,j]$ equals the label of the edge. If $|S|=n$, then ${\cal T}(S)$ has exactly $n+1$ leaves, each labeled by a position from $\{1, \ldots, n+1\}$, denoted $\sn(v)$ (for {\em leaf number}). For a node $v$ in ${\cal T}(S)$, we denote by $L(v)$ the concatenation of the edge labels on the path from the root to node $v$. The string $L(v)$ is sometimes referred to as the substring represented by node $v$. If $v$ is a leaf with $\sn(v)=i$, then $L(v)$ is equal to the $i$'th suffix of $S\$$, $\Suf_i(S\$)$. For a node $v$, we denote by $\depth(v)$ its {\em treedepth}, the number of edges on the path from the root to $v$, and by $\stringdepth(v)=|L(v)|$ its {\em stringdepth}, the length of the string represented by $v$. %
Given node $v$ not equal to the root, $\parent(v)$ is the node which is next on the path from $v$ to the root. Given a node $v$ which is not a leaf and a character $c\in \Sigma$, $\child(v,c)$ returns the unique node $u$ with parent $v$ such that the label of the edge $(v,u)$ starts with character $c$, or the empty pointer if no such node exists.

Given a node $u$ with parent $v$, a {\em locus} is a pair $(u,t)$ s.t.\ $\stringdepth(v) < t \leq \stringdepth(u)$. Let $[i,j]$ be the label of edge $(v,u)$ and $k = t-\stringdepth(v)$. We define $L(u,t)$ as the string $L(v)\cdot S[i,i+k-1]$, the substring represented by locus $(u,t)$. Note that if $t = \stringdepth(u)$, then $L(u,t) = L(u)$. It is an important property of suffix trees that there is a one-to-one correspondence between loci of ${\cal T}(S)$ and substrings of $S\$$. This allows us to define, for a substring $T$ of $S$ (which is also a substring of $S\$$), the {\em locus of $T$}, $\locus(T)=\locus(T,{\cal T}(S))$ as the unique locus $(u,t)$ in ${\cal T}(S)$ with the property that $L(u,t) = T$. Given a substring $T$ of $S$ with locus $\locus(T) = (u,t)$, the set of occurrences of $T$ is given by the set $\{\sn(v) \mid v \text{ is leaf in the subtree rooted in } u\}$.

Let $u$ be a node and $L(u) = aT$, where $a\in \Sigma$ and $T\in \Sigma^*$. The {\em suffix link} of $u$ is defined as $\slink(u)=\locus(T)$. It can be shown that for any node $u$, $\slink(u)$ is a node of ${\cal T}(S)$ (rather than just a locus). Suffix links can also be defined for loci: for locus $(u,t)$ with $L(u,t) = aT$, define $\slink(u,t) = \locus(T)$; these are also called {\em implicit suffix links}. Suffix links are often represented by directed edges, see Figure~\ref{fig:ex1}.

Given a suffix tree ${\cal T}(S)$ with $k$ nodes, and a node $u$ of ${\cal T}(S)$, let $r$ be the rank of the node $u$ in the breath-first search traversal of the tree. We define the reverse index BFS of $u$ as $iBFS(u) = k - r$. Refer to Figure~\ref{fig:example h function} for an example of the reverse index BFS values.

\medskip

Given the string $S$ of length $n$, we denote by $SA_S[1,n+1]$ the {\em suffix array} of $S\$$. We refer the reader to, e.g.,~\cite{MBCT2015}, for a general introduction to suffix arrays.

The suffix array $SA_S[1,n+1]$ of a string $S\$$ is a permutation of $\{1,\ldots,n+1\}$ such that $SA_S[i] = j$ if and only if $S[j,n]\$$ is $i$-th suffix in the lexicographically ordered list of suffixes of $S\$$. The suffix array $SA_S$ and the suffix tree ${\cal T}(S)$ are deeply related.
We can obtain the $SA_S$ by listing the leaves of the suffix tree ${\cal T}(S)$ from left to right, assuming that the children are ordered according to the first characters of their edge labels. In particular, for an inner node $u$, the leaves in the subtree rooted in $u$ yield an interval of the suffix array $SA_S[i,j]$ such that $\{\sn(v) \mid v \text{ is leaf in the subtree rooted in } u\} = \{SA_S[k] \mid i \leq k \leq j\}$.

\subsection{Maximum-oriented indexed priority queue}

A maximum-oriented indexed priority queue~\cite[Sec. 2.4]{SK2011} denoted by $IPQ$, is a data structure that collects a set of $m$ items with keys ${k_1,\ldots,k_m}$, and provides the following operations:
\begin{itemize}
\item {\tt insert($i$,$k$):} insert the element at index $i$ with key $k_i = k$.
\item {\tt promote($i$,$k$):} increase the value of the key $k_i$, associated with $i$, to $k\geq k_i$.
\item {\tt demote($i$,$k$):} decrease the value of the key $k_i$, associated with $i$, to $k\leq k_i$.
\item $(i,k)\gets${\tt max():} return the index $i$ and the value $k$ of the item with maximum key $k_i$; if two items have the same key value, we report the item with larger index.
\item $k\gets${\tt keyOf($i$):} return the value of the key $k_i$ associated with index $i$.
\item $b\gets${\tt isEmpty():} return $true$ if the $IPQ$ is empty and $false$ otherwise.
\item {\tt delete($i$):} remove the element at index $i$ from the $IPQ$.
\end{itemize}

The operations {\tt insert}, {\tt promote}, {\tt demote} and {\tt delete} run in $\Oh(\log(m))$ time, while the operations {\tt max}, {\tt keyOf} and {\tt isEmpty} are performed in $\Oh(1)$ time.

For our purposes, we also require a function $b\gets${\tt allNegative()} that returns $true$ if all key values are negative, and $false$ otherwise.

We use the $IPQ$ to store keys associated to nodes $u$ of a suffix tree ${\cal T}(S)$ using $iBFS(u)$ as index.
For ease of presentation, in slight abuse of notation, we will use $u$ and $iBFS(u)$ interchangeably.

\subsection{Rank, select, and range maximum query}

A {\em bitvector} $B[1,n]$ of length $n$ is an array of $n$ bits. For all $1 \leq i \leq n$ and $b\in\{0,1\}$, we define ${\tt rank}_b(B,i)$ as the number of occurrences of $b$ in $B[1,i]$, and ${\tt select}_b(B,i)$ as the index of the $i$-th occurrence of the symbol $b$ in $B$. If $i > {\tt rank}_b(B,n)$ then ${\tt select}_b(B,i) = n+1$. Furthermore, we set ${\tt select}_b(B,0) = 0$. For both ${\tt rank}$ and ${\tt select}$ operations, if $b$ is omitted we assume $b=1$.
Given a bitvector $B$, ${\tt rank}$ and ${\tt select}$ operations can be supported in $\Oh(1)$ time using $o(n)$ bits of extra space~\cite{clark1997compact}.

\medskip

For an array  $A[1,n]$ of $n$ integers and $1 \leq i \leq j \leq n$, a {\em range maximum query} ${\tt rMq}_A(i,j)$ returns the position of the maximum element of $A[i,j]$. This answer can be provided in $\Oh(1)$ time using $2n + o(n)$ bits of space~\cite{FH11}.

Given $A[1,n]$ and the range maximum query data structure for $A$, we can compute the position of the second greatest element of $A[i,j]$ in $\Oh(1)$ time. In particular,
let $a = {\tt rMq}_A[i,j]$, we have three cases: (i) if $a = i$, then the position of the second greatest element of $A[i,j]$ is $c = {\tt rMq}_A[a+1,j]$; (ii) if $a = j$, then the position of the second greatest element of $A[i,j]$ is $b = {\tt rMq}_A[i,a-1]$; (iii) otherwise, let $b = {\tt rMq}_A[i,a-1]$, and $c = {\tt rMq}_A[a+1,j]$. The position of the second greatest element of $A[i,j]$ is $b$ if $A[b]\geq A[c]$, otherwise it is $c$, since $A[c] > A[b]$.

\section{A pattern discovery algorithm for colored strings using the suffix tree}\label{sec:baseline}

Our main tool will be the suffix tree of the reverse string, ${\cal T} = {\cal T}(S^{\rev})$.  Note that loci in ${\cal T}$ correspond to ending positions of substrings of $S$ in the following sense. Given a locus $(u,t)$ of ${\cal T}$, let $U = L(u,t)^{\rev}$. Then $U$ is a substring of $S$, and its occurrences are exactly the positions $i-|U|+1$, where $i = n-\sn(v)+1$ for some leaf $v$ in the subtree rooted in $u$. In the next lemma we show how to identify $(y,d)$-unique substrings of $S$ with ${\cal T}$,  the suffix tree of $S^{\rev}$.

\begin{lemma}\label{lemma:1}
  Let $U$ be a substring of $S$, ${\cal T} = {\cal T}(S^{\rev})$, and $(u,t) = \locus(U^{\rev},{\cal T})$. Then $U$ is $y$-unique with delay $d$ in $S$ if and only if for all leaves $v$ in the subtree rooted in $u$, $S^{\rev}[\sn(v)-d]$ is colored $y$ under $f^{\rev}$. In particular, $U$ is $y$-unique with delay $0$ in $S$ if and only if all leaves in the subtree rooted in $u$ are colored $y$ under $f^{\rev}$.
\end{lemma}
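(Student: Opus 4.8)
The plan is to translate the two clauses defining $(y,d)$-uniqueness in $S$ into statements about the leaves of ${\cal T} = {\cal T}(S^{\rev})$, using the position-reversing correspondence already noted before the lemma. Write $m = |U|$. First I would pin down the bijection between occurrences of $U$ in $S$ and leaves $v$ in the subtree rooted at $u$. Since $(u,t) = \locus(U^{\rev},{\cal T})$, these leaves are exactly the starting positions $\sn(v)$ of suffixes of $S^{\rev}$ having $U^{\rev}$ as a prefix, i.e.\ the occurrences of $U^{\rev}$ in $S^{\rev}$. Reversing such an occurrence, the copy of $U^{\rev}$ beginning at position $\sn(v)$ corresponds to the occurrence of $U$ in $S$ whose last position is $n - \sn(v) + 1$; writing $i$ for the start of this occurrence of $U$, this reads $i + m - 1 = n - \sn(v) + 1$. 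Because $U$ is a nonempty substring of $S$, $U^{\rev}$ contains no sentinel, so the sentinel leaf (labeled $n+1$) never lies in the subtree of $u$ and every $\sn(v)$ is a genuine position of $S^{\rev}$; I would record this for use in the boundary clause.

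Next I would rewrite each clause of the definition. Using $f(j) = f^{\rev}(n - j + 1)$ together with $i + m - 1 = n - \sn(v) + 1$, a one-line substitution gives $f(i + m - 1 + d) = f^{\rev}(\sn(v) - d)$. Hence the occurrence $i$ is $(y,d)$-good if and only if $f^{\rev}(\sn(v) - d) = y$, i.e.\ $S^{\rev}[\sn(v) - d]$ is colored $y$. The same substitution turns the boundary condition $i + m - 1 + d > n$ into $\sn(v) - d \leq 0$, that is, into the statement that $\sn(v) - d$ falls outside $S^{\rev}$.

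Combining the two, $U$ is $(y,d)$-unique exactly when every leaf $v$ in the subtree of $u$ satisfies $f^{\rev}(\sn(v) - d) = y$ or $\sn(v) - d \leq 0$, which is the stated characterization under the implicit convention that an out-of-range position $\sn(v) - d \leq 0$ vacuously meets the ``colored $y$'' requirement. The ``in particular'' claim is the case $d = 0$: there every leaf has $\sn(v) \geq 1$, so the boundary alternative $\sn(v) - d \leq 0$ never occurs and the condition reduces to ``$S^{\rev}[\sn(v)]$ is colored $y$ for every leaf $v$'', i.e.\ every leaf in the subtree of $u$ is colored $y$.

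I expect the only genuine obstacle to be the bookkeeping in the boundary clause: cleanly matching the combinatorial event ``the occurrence of $U$ runs past the end of $S$'' with the index inequality $\sn(v) - d \leq 0$ in $S^{\rev}$, and stating how the undefined symbol $S^{\rev}[\sn(v) - d]$ is to be interpreted. Once the index identity $i + m - 1 = n - \sn(v) + 1$ is fixed, each clause rewrites routinely, so the crux is getting the reversal arithmetic exactly right.
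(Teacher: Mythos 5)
Your proof is correct and takes essentially the same route as the paper's: both establish the bijection between occurrences of $U$ in $S$ and leaves in the subtree of $u$ via the index identity relating the ending position of $U$ to $\sn(v)$, then substitute into $f^{\rev}$ to turn the $(y,d)$-good condition into $f^{\rev}(\sn(v)-d)=y$ and the out-of-bounds clause into $\sn(v)-d\leq 0$. Your explicit treatment of the boundary clause and of the sentinel leaf is slightly more careful than the paper's, but the argument is the same.
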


\begin{proof}
  It is easy to see that position $i-|U|+1$ is a $y$-good occurrence of $U$ in $S$ with delay $0$ if and only if $U^{\rev}$ is a prefix of $\Suf_{n-i+1}(S^{\rev})$ and $f^{\rev}(n-i+1) = y$. By the properties of the suffix tree, all occurrences of $U^{\rev}$ correspond to the leaves of the subtree rooted in $u$, where $(u,t) = \locus(U^{\rev},{\cal T})$. Thus, $U$ is $(y,0)$-unique if and only if all of its occurrences are $(y,0)$-good, which is the case if and only if all leaves of the subtree rooted in $u$ are colored $y$ under $f^{\rev}$. More generally, position $i-|U|+1$ is a $y$-good occurrence of $U$ in $S$ with delay $d$ if and only if $\Suf_{n-i+1}(S^{\rev})$ is prefixed by $U^{\rev}$ and $f^{\rev}(n-i+1-d) = y$. Thus $U$ is $(y,d)$-unique if and only if for all leaves $v$ in the subtree rooted in $u$, $S^{\rev}[\sn(v)-d]$ is colored $y$ under $f^{\rev}$.
\end{proof}

In the following, we will refer to a {\em node} $u$ of ${\cal T}$ as {\em $(y,d)$-unique} if $L(u)^{\rev}$ is a $(y,d)$-unique substring of $S$. We can now state the following corollary:

\begin{corollary}\label{corollary: all children are (y,d)-unique}
  Let $U$ be a substring of $S$, ${\cal T} = {\cal T}(S^{\rev})$, and $(u,t) = \locus(U^{\rev},{\cal T})$ such that $u$ is an inner node of ${\cal T}(S)$. Then $U$ is $(y,d)$-unique in $S$ if and only if all children of $u$ are $(y,d)$-unique.
\end{corollary}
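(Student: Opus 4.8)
The plan is to read the statement off Lemma~\ref{lemma:1} together with one elementary structural fact about suffix trees. The starting point is to observe that the characterization supplied by Lemma~\ref{lemma:1} is really a property of the \emph{node} $u$ and the delay $d$ alone. Writing, for a leaf $v$ and a delay $d$, the predicate $P(v,d)$ for ``$S^{\rev}[\sn(v)-d]$ is colored $y$ under $f^{\rev}$'', Lemma~\ref{lemma:1} says that $U$ is $(y,d)$-unique if and only if $P(v,d)$ holds for every leaf $v$ in the subtree rooted at $u$. The right-hand side mentions neither $|U|$ nor the locus depth $t$; it depends only on the leaf set below $u$ and on $d$. Hence applying Lemma~\ref{lemma:1} once to $U$ at locus $(u,t)$ and once to the full string $L(u)^{\rev}$ at the node $u$ (i.e.\ at locus $(u,\stringdepth(u))$) yields the identical condition, so $U$ is $(y,d)$-unique exactly when the node $u$ is $(y,d)$-unique. (This matches the intuition behind Observation~\ref{obs:1}, since $U$ is a suffix of $L(u)^{\rev}$ and the two substrings occur with the same ending positions in $S$.)

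The second ingredient is the fact that, because $u$ is an inner node, it is not a leaf, and every leaf in the subtree rooted at $u$ lies in the subtree of exactly one child of $u$. Thus the leaf set below $u$ is the disjoint union of the leaf sets below its children $c_1,\ldots,c_k$. Combining this partition with the reformulation above gives the chain of equivalences I would spell out: $U$ is $(y,d)$-unique if and only if $P(v,d)$ holds for all leaves $v$ below $u$, if and only if for every child $c_j$ the predicate $P(v,d)$ holds for all leaves $v$ below $c_j$, if and only if each child $c_j$ is $(y,d)$-unique, where the last equivalence is once more Lemma~\ref{lemma:1}, now applied at the node $c_j$.

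The only points requiring a moment of care, rather than a genuine obstacle, are two. First, I must check that the \emph{same} delay $d$ governs $u$ and all of its children; this holds precisely because $P(v,d)$ is independent of the length of the substring sitting at the locus, so the delay does not shift when passing from $L(u)^{\rev}$ to the longer strings $L(c_j)^{\rev}$ (equivalently, $L(u)^{\rev}$ is a suffix of each $L(c_j)^{\rev}$, and for a common leaf these occurrences end at the same position). Second, I must permit children $c_j$ that are themselves leaves: for such a child Lemma~\ref{lemma:1} is applied to the singleton subtree $\{c_j\}$, which is covered by the lemma as stated. With these two checks in place the equivalence chain is immediate, and the corollary follows.
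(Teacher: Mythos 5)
Your proof is correct and matches the paper's intent: the paper offers no explicit proof, treating the corollary as an immediate consequence of Lemma~\ref{lemma:1}, and your argument --- that the leaf-condition of Lemma~\ref{lemma:1} depends only on the node $u$ and the delay $d$, and that the leaves below $u$ partition into the leaves below its children --- is exactly the intended derivation.
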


\subsection{Finding all $(y,d)$-unique substrings}\label{sec:first_algo}

Our first algorithm {\sc Algo1} uses the suffix tree ${\cal T}$ of the reverse string to identify all $(y,d)$-unique substrings of $S$, not only the minimal ones, for fixed $y$ and $d$. It marks the $(y,d)$-unique nodes of ${\cal T}$ in a postorder traversal of the tree. Note that if $i>n-d$, then position $i$ cannot be $(y,d)$-good, simply because the position in which we would expect a $y$ lies beyond the end of string $S$. In correspondence with the definition of $(y,d)$-unique substrings (Definition~\ref{def-y-unique}), we will treat such positions as if they were $(y,d)$-good.

The function $g(u):V({\cal T}) \to \{0,1\}$ is defined as follows:

\medskip

\begin{itemize}
	\item   for a leaf $u$ with leaf number $\sn(u) = i$:
	\[ g(u) = \begin{cases}
	1 & \text{ if either $i \leq d$ or $f^{\rev}(i-d)=y$,}\\
	0 & \text{ otherwise},
	\end{cases}\]
	\item for an inner node $u$:
	\[ g(u) =
	\begin{cases}
	y & \text{ if $g(v)=1$ for all children $v$ of $u$,}\\
	0 & \text{ otherwise.}
	\end{cases} \]
\end{itemize}

\medskip

The algorithm computes $g(u)$ for every node $u$ in a bottom-up fashion, assigning $g(u)=1$ if and only if $u$ is $(y,d)$-unique or if it is too close to the beginning of the string $S^{\rev}$. If $g(u)=1$, in addition it outputs all strings represented along the incoming edge of $u$, except for substrings which contain the $\$$-sign, i.e.\ suffixes of $S^{\rev}\$$. For details,  see Algorithm~\ref{algo:algo1}.

\medskip

\IncMargin{1em}
\begin{algorithm}[t]
  \SetKwInOut{Input}{input}\SetKwInOut{Output}{output}
  \SetKwProg{Procedure}{procedure}{:}{end}
	\DontPrintSemicolon
  \Input{A colored string $S$, the suffix tree ${\cal T}$ of $S^{\rev}$, and $y\in \Gamma$.}
  \Output{All pairs $(T,d)$ such that $T$ is a $(y,d)$-unique substring of $S$.}
	\LinesNumbered
	\BlankLine
	\For{$d \gets 0$  to $n$}{
	{\sc Unique}($\textit{root},y,d$)}
	\BlankLine
	\Procedure{\sc Unique($u,y,d$)}{
    \If(\tcp*[f]{$u$ is a leaf}){$u$ is a leaf}{
    $i \gets$ $\sn(u)$\;
    \If{$i \leq d$ {\bf or} $f^\rev(i-d)=y$}{$g(u) \gets 1$}
    \Else{$g(u) \gets 0$}
    }
    \Else(\tcp*[f]{$u$ is an inner node}){$g(u) \gets \wedge_{v \text{ child of $u$}} \textsc{Unique($v,y,d$)}$}
    \If{$g(u)=1$}{\label{line:begin changes}
    \If(\tcp*[f]{do not output $\$$-substrings}){$u$ is a leaf}{output $L(u,t)^{\rev}$ for every $t = \stringdepth(parent(u))+1, \ldots, \stringdepth(u)-1$}
    \Else{
    output $(L(u,t)^{\rev},d)$ for every $t = \stringdepth(parent(u))+1, \ldots, \stringdepth(u)$} \label{line:end changes}
    }
    \Return $g(u)$
  }
	\caption{{\sc Algo1}}\label{algo:algo1}
\end{algorithm}
\DecMargin{1em}

{\em Analysis:} For fixed $d$, computing $g$ takes amortized $\Oh(n)$ time over the whole tree, since computing $g(u)$ is linear in the number of children of $u$, and therefore, charging the check whether for a child $v$, $g(v)=1$, to the child node, we get constant time per node. So, for fixed $d$, we have $\Oh(n+K) = \Oh(n^2)$ time, where $K$ is the number of $(y,d)$-unique substrings. Altogether, for $d=0,\ldots,n$, the algorithm takes $\Oh(n^3)$ time.

\begin{example}\label{ex:ex2}
  In the running example (Fig.~\ref{fig:ex1}), for color $y$ and delay $d=3$, the leaf nodes $9,2,7,1,$ and $3$ are marked with $1$, and therefore the only inner node $u$ which gets $g(u)=1$ is the parent of leaves number $9,2,7$. {\sc Algo\ref{algo:algo1}} outputs the $(y,3)$-unique substrings {\tt baca}, {\tt cbaca}, {\tt acbaca}, {\tt cacbaca}, {\tt acacbaca}, {\tt cacacbaca}, {\tt acacacbaca}, {\tt caca}, {\tt acaca}, {\tt ca}, {\tt aca}, {\tt ab}, {\tt cab}, {\tt acab}, {\tt bacab}, {\tt cbacab}, {\tt acbacab}, {\tt cacbacab}, {\tt acacbacab}, {\tt cacacbacab}, {\tt bac}, {\tt cbac}, {\tt acbac}, {\tt cacbac}, {\tt acacbac}, {\tt cacacbac}, {\tt acacacbac.}
\end{example}

{\it Remark:} Note that some of these substrings do not occur even once in a position such that the last character is followed by a $y$ with delay $d=3$. For instance, the only occurrence of the substring {\tt bac} in $S$ is at position $7$, so we would expect to see color $y$ at position $9+3=12$, but the string $S$ ends at position $11$. We will treat this and similar questions in Section~\ref{sec:output}.

\subsection{Outputting only minimally $(y,d)$-unique substrings}\label{sec:minimal}

We next modify Algorithm {\sc Algo1} to output only minimally $(y,d)$-unique substrings. As already noted, the work done by {\sc Algo1} in each node is constant except for the output step, which is proportional to the length of the edge label leading to $u$.

In terms of the suffix tree ${\cal T}$ of $S^{\rev}$, minimality can be translated into conditions on the parent node and on the suffix link parent node (equivalently: the suffix link) in ${\cal T}$.  We first need another definition:

\begin{definition}[Left-minimal nodes, left-minimal labels]
  Let $u$ be a node of ${\cal T} = {\cal T}(S^{\rev})$, different from the root, and let $v = \parent(u)$. We call $u$ {\em left-minimal} for $(y,d)$ if $u$ is $(y,d)$-unique but $v$ is not and the label of the edge $(v,u)$ is not equal to $\$$.
  If $u$ is $(y,d)$-unique and left-minimal, then we can define $\leftmin(u) = x_1\cdot L(v)^{\rev}$, the left-minimal $(y,d)$-unique substring of $S$ associated to $u$, where $x=x_1\cdots x_k\in \Sigma^+$ is the label of edge $(v,u)$.
\end{definition}

\begin{example}\label{ex:ex3}
  In our running example, let node $u$ be the parent of leaf nodes $9,2,7$, i.e.\ $u= \locus({\cal T}, {\tt aca})$. Then $u$ is left-minimal, since it is $(y,3)$-unique but its parent is not. Its left-minimal label is $\leftmin(u) = {\tt ca}$. See Fig.~\ref{fig:ex1}.
\end{example}

It is easy to modify Algorithm~\ref{algo:algo1} to output only left-minimal substrings: Whenever for an inner node $u$ we get $g(u) =0$, then for every child $v$ of $u$ with $g(v) = 1$, we output $\leftmin(v)$ (if defined). This can be done by  replacing lines~\ref{line:begin changes} to~\ref{line:end changes} in Algorithm~\ref{algo:algo1} by:

\medskip
\begin{quote}
\RestyleAlgo{plain}
\IncMargin{1em}
\begin{algorithm}[H]
	\DontPrintSemicolon
	\LinesNumbered
	\BlankLine
  \setcounter{AlgoLine}{11}
    \If{$g(u) = 0$}{
      \For{ each child $v$ of $u$ with $g(v)=1$}{
        \If{$\leftmin(v)$ is defined}{
          output $(\leftmin(v),d)$
        }
      }
    }
\end{algorithm}
\DecMargin{1em}
\RestyleAlgo{ruled}
 \end{quote}

\medskip

\begin{example}\label{ex:ex4}
The resulting algorithm now outputs, for color $y$ and $d=3$, the left-minimal substrings {\tt ca,ab,bac}.
\end{example}

However, we are interested in substrings which are both left- {\em and} right-minimal. While left-minimality can be identified by checking the parent of a node $u$, for right-minimality, Observation~\ref{obs:1} part (3) tells us that we need to check whether the string without the last character is $(y,d+1)$-unique. In ${\cal T}$, this translates to checking the suffix link of the locus of the left-minimal substring $\leftmin(u)$.

 \begin{proposition}\label{prop:minimality}
   Let $u$ be an inner node of ${\cal T} = {\cal T}(S^{\rev})$, different from the root, such that $L(u)^{\rev}$ is $(y,d)$-unique in $S$. Let $v=\textit{parent(u)}$, and $x_1$ be the first character on the edge $(v,u)$. Further, let $t =  \stringdepth(v)+1$, and $(u',t') = \slink(u,t)$. Then the substring $U = x_1\cdot L(v)^{\rev}$ is minimally $(y,d)$-unique in $S$ if and only if $v$ is not $(y,d)$-unique and $u'$ is not $(y,d+1)$-unique.
 \end{proposition}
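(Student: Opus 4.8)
The plan is to reduce the statement to the two one-sided minimality conditions via Observation~\ref{obs:1}(1), and then translate each into the claimed tree condition. First I would record that $U = x_1\cdot L(v)^{\rev}$ is exactly $L(u,t)^{\rev}$ for $t=\stringdepth(v)+1$, so $\locus(U^{\rev},{\cal T})=(u,t)$ lies on the edge $(v,u)$ and thus has node $u$. Since by hypothesis $L(u)^{\rev}$ is $(y,d)$-unique and, by Lemma~\ref{lemma:1}, $(y,d)$-uniqueness of a substring depends only on the set of leaves in the subtree rooted at the node of its locus, $U$ is itself $(y,d)$-unique. This is what lets me invoke Observation~\ref{obs:1}(1): $U$ is minimally $(y,d)$-unique if and only if it is both left-minimal and right-minimal. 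It then remains to match left-minimality with ``$v$ is not $(y,d)$-unique'' and right-minimality with ``$u'$ is not $(y,d+1)$-unique''.

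For left-minimality I would write $U = aXb$ with $a=x_1$, so that $Xb = L(v)^{\rev}$ is precisely $U$ with its first character deleted. By definition $U$ is left-minimal iff $Xb$ is not $(y,d)$-unique, and $L(v)^{\rev}$ is $(y,d)$-unique exactly when the node $v$ is $(y,d)$-unique. Hence left-minimality is equivalent to $v$ not being $(y,d)$-unique, with no change of delay (consistent with the bookkeeping of Definition~\ref{def-y-unique}(3), where deleting the last character keeps $d'=d$).

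The main work is the right-minimality step, since it must be routed through the suffix link. Here $aX$ is $U$ with its last character $b$ removed, and by the definition of right-minimal the relevant delay increases to $d+1$. I would compute $(aX)^{\rev}$: since $U^{\rev}=L(u,t)=L(v)x_1$ and $b$ is its first character (equivalently, the first character of $L(v)$), deleting $b$ from the front of $L(u,t)$ yields $(aX)^{\rev}$. By the definition of the implicit suffix link of a locus, $\slink(u,t)=\locus((aX)^{\rev})=(u',t')$, so that $L(u',t')^{\rev}=aX$. The delicate point---the step I expect to be the crux---is that the proposition refers to the \emph{node} $u'$ rather than to the locus $(u',t')$: I must argue that $aX=L(u',t')^{\rev}$ is $(y,d+1)$-unique if and only if the node $u'$ (i.e.\ $L(u')^{\rev}$) is $(y,d+1)$-unique. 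This again follows from Lemma~\ref{lemma:1}, because $(u',t')$ and $u'$ share the node $u'$ and hence the same subtree of leaves, on which $(y,d+1)$-uniqueness solely depends. Therefore right-minimality of $U$ is equivalent to $u'$ not being $(y,d+1)$-unique, and combining the two equivalences with Observation~\ref{obs:1}(1) yields the claim. A minor boundary case to check separately is $|U|=1$ (when $v$ is the root), where the decomposition $U=aXb$ degenerates; this is dispatched by the convention that the root, representing the empty string, is not $(y,d)$-unique.
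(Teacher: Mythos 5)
Your proof is correct and follows essentially the same route as the paper's: reduce minimality to left- and right-minimality via Observation~\ref{obs:1}(1), identify left-minimality with $v$ not being $(y,d)$-unique, and route right-minimality through the implicit suffix link $\slink(u,t)$ to the condition on $u'$ with delay $d+1$. You are somewhat more explicit than the paper on two points it leaves implicit --- that $(y,d)$-uniqueness of a locus depends only on the leaf set of its node (justifying the passage from the locus $(u',t')$ to the node $u'$ via Lemma~\ref{lemma:1}) and the degenerate case $|U|=1$ --- but this is elaboration, not a different argument.
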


 \begin{proof}
   For sufficiency, let $U$ be minimally $(y,d)$-unique in $S$. Since $x_1 L(v)^{\rev} = U$, and $U$ is left-minimal, therefore $v$ is not $(y,d)$-unique. Similarly, if $U' = L(u',t')^\rev$, then we have that $U = U'a$, and $u'$ is not $(y,d+1)$-unique by right-minimality of $U$.

   Conversely, since $u$ is $(y,d)$-unique and $v$ is not, by definiton of left-minimality, $U=\leftmin(u)$ is left-minimal $(y,d)$-unique in $S$. Let $U' = L(u',t')^{\rev}$, thus $ U = U'a$, for some character $a\in \Sigma^+$. Since $U'$ is not $(y,d+1)$-unique, therefore $U$ is right-minimal.
\end{proof}

We can use Proposition~\ref{prop:minimality} as follows. Once a left-minimal $(y,d)$-unique node $u$ has been found, check whether $u'$ is $(y,d+1)$-unique, where $u'$ is the node below the locus $\slink(u,\stringdepth(parent(u))+1)$. It is easy to find node $u'$ by noting that $u' = child(\slink(parent(u)),x_1)$, where $x_1$ is the first character of the edge label leading to $u$. But how do we know whether $u'$ is $(y,d+1)$-unique?

The answer is that we will process the distances $d$ in descending order, from $d=n$ down to $d=0$. At the end of the iteration for $d$, we retain the information, keeping a flag on every node $u$ which was identified as $(y,d)$-unique (i.e.\ which had $g(u)=1$). During the iteration for $d-1$, we can then query node $u'$ to find out whether it is $(y,d)$-unique. For details, see Algorithm~\ref{algo:algo2}.

\IncMargin{1em}
\begin{algorithm}[tbp]
  \SetKwInOut{Input}{input}\SetKwInOut{Output}{output}
  \SetKwProg{Procedure}{procedure}{:}{end}
	\DontPrintSemicolon
  \Input{a colored string $S$, the suffix tree ${\cal T}$ of $S^{\rev}$ with suffix links, and $y\in \Gamma$.}
  \Output{all pairs $(T,d)$ such that $T$ is a minimally $(y,d)$-unique substring of $S$.}
	\LinesNumbered
	\BlankLine
	\For{$d \gets n$ downto $0$}{
	{\sc MinUnique}($\textit{root},y,d$)}
	\BlankLine
	\Procedure{\sc MinUnique($u,y,d$)}{
    \If(\tcp*[f]{$u$ is a leaf}){$u$ is a leaf}{
      $i \gets$ $\sn(u)$\;
      \If{$i \leq d$ {\bf or} $f^\rev(i-d)=y$}{$g(u) \gets 1$}
      \Else{$g(u) \gets 0$}
    }
    \Else(\tcp*[f]{$u$ is an inner node}){$g(u) \gets \wedge_{v       \text{ child of $u$}} \textsc{MinUnique($v,y,d$)}$}\label{line:minimality  base}
    \If(\tcp*[f]{outputting minimal substrings for children}){$g(u)=0$}{
      \For{each child $v$ with $g(v)=1$}{
      \If{$\leftmin(v)$ is defined}{
        $(v',t) \gets \slink(v,\stringdepth(u)+1)$ \;
        \If(\tcp*[f]{flag from previous round}){$v'$ is not $(y,d+1)$-unique}{
          output $(\leftmin(v),d)$}\label{line:output base}
        }
      }
    }
    \Return $g(u)$

  }
	\caption{{\sc Algo2}}\label{algo:algo2}
\end{algorithm}
\DecMargin{1em}

\medskip

\begin{example}\label{ex:ex5}
  In the running example, we know from the previous round that the only nodes that are $(y,4)$-unique are the leaves number $4,2,1,10,3,$ and $8$. We can now deduce that the substring ${\tt ca}$ is right-minimal, because  $u = \locus({\tt ca})$ is not $(y,4)$-unique, and $\slink(\locus({\cal T},{\tt ca}^{\rev})) = (u,1)$. Looking at the string $S$ we see that ${\tt ca}$ is indeed right-minimal, since {\tt c} is not $(y,3)$-unique: it has an occurrence, in position 6, which is not followed by a $y$ but by an $x$ at position $10=6+4$ (delay $4$). Similarly, the left-minimal substring {\tt ab} is also right-minimal, since its suffix link is not $(y,4)$-unique, while the left-minimal substring {\tt bac} is not $(y,3)$-unique, because its suffix link is $(y,4)$-unique, see Fig.~\ref{fig:ex1}. %
\end{example}

\medskip

{\em Analysis: } For fixed $d$, the time spent on each leaf is constant (lines 5 to 10 in {\sc Algo2}); we charge the check of $g(v)$ in line 12 to the child $v$, as well the work in lines 14 to 18 (computing $\leftmin(v)$ and checking the flag on $v'$ from the previous round); these are all constant time operations, so we have amortized constant time per node, and thus
$\Oh(n)$ time for fixed $d$. Therefore, the total time taken by Algorithm~\ref{algo:algo2} is $\Oh(n^2)$.

\subsection{An algorithm for all colors}\label{sec:general_colors}

In some situations, one is interested in all minimally $(y,d)$-unique substrings, {\em for any color $y$}. Our third algorithm deals with this case (Problem 2). It is similar to {\sc Algo2}, except it uses a different coloring function $g'$. The new function  $g':V \to \Gamma \cup \{*,0\}$, is defined as follows:

\medskip

\begin{itemize}
	\item   for a leaf $u$ with leaf number $\sn(u)=i$:
	\[ g'(u) = \begin{cases}
	f^{\rev}(i-d)  & \text{ if  $i-d > 0$},\\
	* & \text{ if $i-d \leq 0$}
	\end{cases}\]
	\item for an inner node $u$:
	\[ g'(u) =
	\begin{cases}
	* & \text{ if for all children $v$ of $u$: $g'(v) = *$,}\\
	y \in \Gamma & \text{ if $u$ has at least one child $v$ with $g'(v)=y$ and }\\
	& \quad \text{ for all other children $v'$ of $u$: $g'(v') \in \{y,*\}$ },\\
	0 & \text{ otherwise.}
	\end{cases} \]
\end{itemize}

\medskip

Thus a node $u$ is colored $y$ if and only if all leaves of the subtree rooted in $u$ are either colored $y$ or $*$, and at least one leaf is colored $y$. We refer to such a node as {\em monochromatic}. A node is colored $*$ if all leaves in the subtree are within $d$ of the beginning of $S^{\rev}$; such a node can have monochromatic ancestors in the tree. Finally, a node is colored $0$ if in its subtree there are at least two leaves which are colored by different colors from $\Gamma$. For a node colored $0$, all of its ancestors are also colored $0$.

\begin{example}\label{ex:ex6}
  In our example,  for $d=3$, the leaves 11, 4, and 8 are colored $z$, the leaves 9 and 7 are colored $y$, the leaves 5, 10, and 6 are colored $x$, and the leaves 1,2, and 3 are colored $*$. The only monochromatic inner nodes are  $\locus({\tt b})$ (colored $x$), and $\locus({\tt aca})$ (colored $y$), while all others are colored $0$. See Figure~\ref{fig:allcolors}.

\begin{figure}
    \centering
    \begin{subfigure}[b]{\textwidth}
        \includegraphics[width=\textwidth]{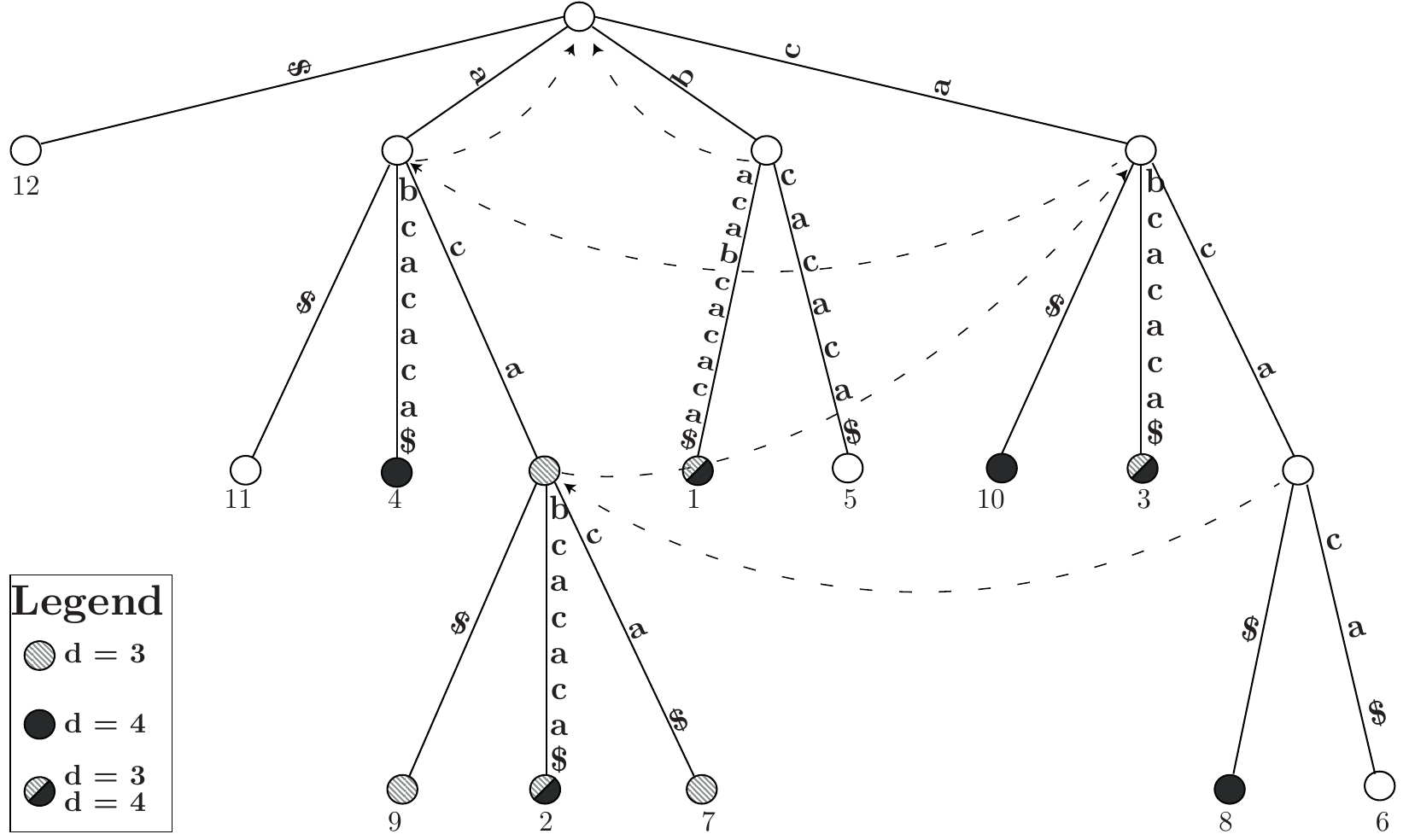}
        \caption{%
        The nodes are colored according to function $g$ for the character $y$, for $d = 3$ (dashed) and for $d = 4$ (solid), see Example~\ref{ex:ex2}.\label{fig:ycolor}}
    \end{subfigure}

    \begin{subfigure}[b]{\textwidth}
      \includegraphics[width=\textwidth]{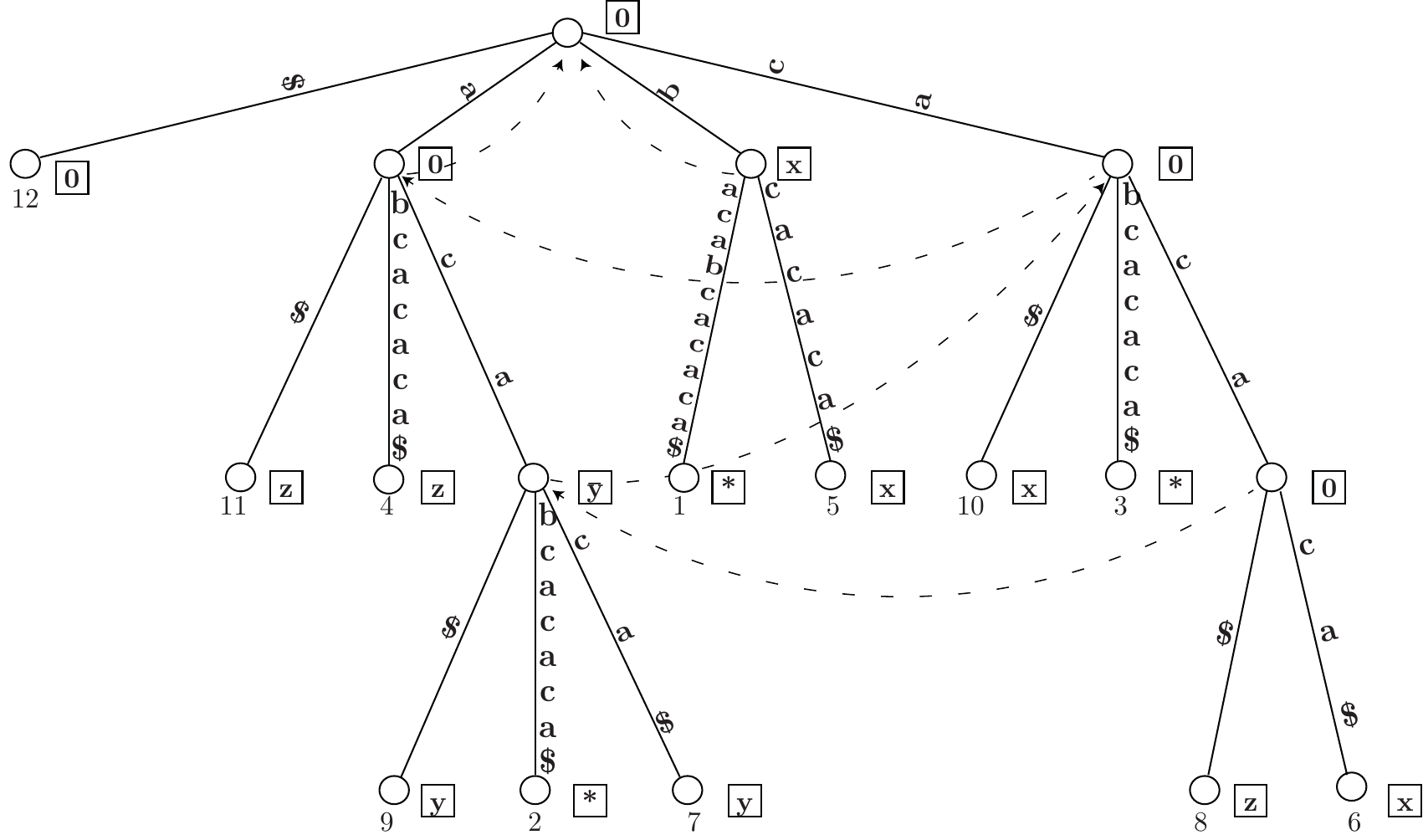}
      \caption{%
      The nodes are marked according to function $g'$ for $d = 3$, see Example~\ref{ex:ex6}.\label{fig:allcolors}}
    \end{subfigure}
    \caption{The suffix tree ${\cal T}$ of the reverse string $S^{\rev} = {\tt bacabcacaca}$, where $S = {\tt acacacbacab}$, see Example~\ref{ex:ex1}. For clarity, the edges carry the label itself rather than a pair of pointers into the string. Suffix links are drawn as dotted directed edges.\label{fig:ex1}}
\end{figure}

\end{example}

\IncMargin{1em}
\begin{algorithm}[t!]
  \SetKwProg{Procedure}{procedure}{:}{end}
  \SetKwInOut{Input}{input}\SetKwInOut{Output}{output}
	\DontPrintSemicolon
  \LinesNumbered
  \Input{a colored string $S$, and the suffix tree ${\cal T}$ of $S^{\rev}$ with suffix links.}
  \Output{all triples $(T,y,d)$ such that $T$ is a minimally $(y,d)$-unique substring of $S$}
	\BlankLine
    \For{$d \gets n$ downto $0$}{
      {\sc AllColorsMinUnique}($\textit{root},d$)
    }
	\BlankLine
	\Procedure{\sc AllColorsMinUnique($u,d$)}{
    \If(\tcp*[f]{$u$ is a leaf}){$u$ is a leaf}{
    $i \gets$ $\sn(u)$\;
    \If{$i \leq d$}{$g'(u) \gets *$}
    \Else{$g'(u) \gets f^{\rev}(i-d)$}
    }
    \Else(\tcp*[f]{$u$ is an inner node}){
    $X \gets \{$ {\sc AllColorsMinUnique}($v,y,d$) $\mid v \text{ child of } u\}$\;
    \If{$X = \{*\}$}{$g'(u) \gets *$}
    \Else{\If{$X = \{y\}$ {\bf or} $X = \{y,*\}$ with $y\in \Gamma$}
    {$g'(u) \gets y$}
    \Else{$g'(u) \gets 0$}
    }}
    \If(\tcp*[f]{outputting minimal substrings for children}){$g'(u)=0$}{\label{line:minimality  base-all}
      \For{each child $v$ with $g'(v) = y \in \Gamma$}{
        \If{$\leftmin(v)$ is defined}{
          $(v',t) \gets \slink(v,\stringdepth(u)+1)$ \;
          \If(\tcp*[f]{flag from previous round}){$v'$ is not $(y,d+1)$-unique}{
            output $(\leftmin(v),y,d)$
          }
        }
      }
    }
    \Return $g'(u)$

  }
	\caption{{\sc Algo3}%
  }\label{algo:algo3}
\end{algorithm}
\DecMargin{1em}

\medskip

Thus, Algorithm~\ref{algo:algo3} finds all minimally $(y,d)$-unique substrings for all colors simultaneously, using the same ideas as {\sc Algo2}. The main difference is that now the coloring function $g'$ is not binary, and accordingly, the information we have to store from the previous round (which will be needed to decide whether the substring is right-minimal) is no longer binary. See {\sc Algo3} for details.

{\em Analysis:} The algorithm has $n$ iterations, every iteration takes $\Oh(n)$ time, so altogether we have again $\Oh(n^2)$ time.

\medskip

Therefore, if the color of interest is not part of the input, we can solve the problem in $\Oh(n^2)$ time, which is also a worst-case lower bound on the output size, see Sec.~\ref{sec:basics}. However, if the color $y$ is part of the input, then this algorithm can be further improved. We will present this improvement in the next section.

\section{Skipping Algorithm}\label{sec:skipping}

  In this section, we discuss the discovering of $(y,d)$-unique substrings that are minimal. As in the baseline algorithm, we build the suffix tree ${\cal T}(S^\rev)$ and, intuitively, we navigate it discovering all left-minimal $(y,d)$-unique substrings one by one, reporting only those that are minimal. Thus, according to Proposition~\ref{prop:minimality}, we have to discover all left-minimal $(y,d+1)$-unique substrings before discovering left-minimal $(y,d)$-unique substrings.

  To this end, fixing $\ell$, for each node $u$ of ${\cal T}(S^\rev)$, we determine the largest delay $d$ smaller than $\ell$ such that $L(u)^\rev$ can be $(y,d)$-unique, denoted by $h(u,\ell)$.
  We consider four different cases:
  \begin{itemize}
	 \item If $u$ is a leaf, then $L(u)^\rev$ is the $j$-prefix of $S$, where $j = n-\sn(u)+1 = |L(u)|$
	\begin{itemize}
		\item If $\sn(u) < \ell$, then $j+\ell - 1 > n$ thus $L(u)^\rev$ is $(y,\ell-1)$-unique since the position of the color is beyond the end of the string, thus $h(u,\ell) = \ell-1$.
		\item If $\sn(u) \geq \ell$ and there exists an $i < \ell$ such that $f(j+i) = y$, then the highest possible value $d<\ell$ such that $L(u)^\rev$ is $(y,d)$-unique is given by the position of the furthest occurrence of $y$ within a distance of $\ell-1$ from $j$, thus $h(u,\ell) = \max\{ i < \ell \mid f(j+i) = y \}.$
		\item Otherwise, if such $i$ does not exists, we set $h(u,\ell) = -1$.
	\end{itemize}
	\item If $u$ is an internal node of ${\cal T}(S^\rev)$, then let $k = \min\{ h(v,\ell) \mid v \text{ child of } u \}$, since it is not possible that $L(u)^\rev$ is $(y,d')$-unique, for any $k < d' < \ell$, thus $h(u,\ell) = k$.
  \end{itemize}
	When $u$ is an inner node in general, we do not know if $L(u)^\rev$ is $(y,d)$-unique for $d = h(u,\ell)$, unless for all nodes $v$ in the subtree rooted in $u$, there exists an $\ell_v$ such that $h(u,\ell) < \ell_v \leq \ell$ and $h(v,\ell_v) = h(u,\ell)$. This is true if $h(v,d+1) = h(u,\ell)$ for all $v$.

  The definition of $h(u,\ell)$ is as follows: $$ h(u,\ell) = \begin{cases} \ell -1 & \text{ if } u \text{ is a leaf and } \sn(u) < \ell,\\ \max\{ i < \ell \mid f(n-\sn(u)+1+i) = y \} & \text{ if } u \text{ is a leaf and such } i \text{ exists, } \\ \min\{ h(v,\ell) \mid v \text{ child of } u \} & \text{ if } u \text{ is an inner node,} \\ -1 & \text{ otherwise. } \end{cases}$$

 	\begin{lemma}\label{lemma:nodes}
 		Let $u$ be a node of ${\cal T}(S^\rev)$, fix $d$, $h(u,d+1) = d$ if and only if $u$ is $(y,d)$-unique.
 	\end{lemma}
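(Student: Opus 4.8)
The plan is to prove both directions at once by structural induction on $\mathcal{T}(S^{\rev})$, processing nodes bottom-up, with the leaves as base case and the internal nodes as inductive step. Recall that a node $u$ is called $(y,d)$-unique exactly when $L(u)^{\rev}$ is a $(y,d)$-unique substring of $S$, and that this is captured by the function $g$ from Section~\ref{sec:first_algo}: a leaf $u$ with $\sn(u) = i$ is $(y,d)$-unique iff $i \leq d$ or $f^{\rev}(i-d) = y$, while an internal node is $(y,d)$-unique iff all of its children are (Corollary~\ref{corollary: all children are (y,d)-unique}).

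For the base case I would first rewrite the leaf condition in terms of $f$ rather than $f^{\rev}$. Writing $j = |L(u)| = n - \sn(u) + 1$, so that $L(u)^{\rev}$ is the $j$-prefix of $S$, the identity $f^{\rev}(\sn(u)-d) = f(n-(\sn(u)-d)+1) = f(j+d)$ shows that $u$ is $(y,d)$-unique iff $\sn(u) \leq d$ or $f(j+d) = y$. I would then match this against the definition of $h(u,d+1)$. If $\sn(u) < d+1$, the first leaf rule gives $h(u,d+1) = (d+1)-1 = d$, and simultaneously $\sn(u) \leq d$ makes $u$ $(y,d)$-unique. If $\sn(u) \geq d+1$, then because $d$ is the largest admissible delay in the set $\{\delta < d+1 \mid f(j+\delta) = y\}$, we have $h(u,d+1) = d$ precisely when $f(j+d) = y$, which is exactly the condition for $(y,d)$-uniqueness in this range; the remaining subcase (no such $\delta$, giving $h(u,d+1) = -1 \neq d$ together with $f(j+d) \neq y$) is subsumed here.

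For the inductive step, let $u$ be internal. By Corollary~\ref{corollary: all children are (y,d)-unique}, $u$ is $(y,d)$-unique iff every child $v$ is $(y,d)$-unique, and by the induction hypothesis the latter holds iff $h(v,d+1) = d$ for every child $v$. Since $h(u,d+1) = \min\{h(v,d+1) \mid v \text{ child of } u\}$ by definition, it remains to show that this minimum equals $d$ iff each $h(v,d+1)$ equals $d$. The crucial observation, and the only point needing care, is that $h(v,\ell) \leq \ell-1$ always holds, directly from the definition of $h$ as the largest delay strictly smaller than $\ell$. Hence every $h(v,d+1) \leq d$, so the minimum can equal $d$ only if no child's value drops below $d$, i.e.\ iff all of them equal $d$; the converse is immediate. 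This monotonicity bound is what converts the ``$\min$'' in the internal-node rule into the ``for all children'' condition demanded by the uniqueness characterization, and it is the main (if modest) obstacle in the argument.
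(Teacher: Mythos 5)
Your proof is correct and follows essentially the same route as the paper's: match the leaf definition of $h$ against the leaf characterization of $(y,d)$-uniqueness, then use the bound $h(v,d+1)\leq d$ to turn ``the minimum over children equals $d$'' into ``every child's value equals $d$''. The only difference is presentational --- you organize the inner-node case as an explicit structural induction via Corollary~\ref{corollary: all children are (y,d)-unique}, whereas the paper unfolds the recursion directly down to the leaves and invokes Lemma~\ref{lemma:1}; your base case is in fact spelled out more carefully than the paper's.
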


 	\begin{proof}
 		We first prove that if $h(u,d+1) = d$ then $u$ is $(y,d)$-unique. We consider two cases.
 		If $u$ is a leaf, then, by definition of $h(u,d+1)$, we have that $u$ is $(y,d)$-unique.
 		If $u$ is an inner node, then $d = \min\{h(v,d+1) \mid v $ child of $u \}$. Since for all nodes $v$, $h(v,d+1) \leq d$, then for all children $v$ of $u$, we have that $h(v,d+1) = d$. In particular, this holds for all leaves in the subtree rooted in $u$, thus $u$ is $(y,d)$-unique.
 		
 		We first prove that if $u$ is $(y,d)$-unique then $h(u,d+1) = d$. We consider again two cases.
 		If $u$ is a leaf, then, by definition of $(y,d)$-unique, we have that either $\sn(u) < d+1$ or $f(n-\sn(u)+1+d) = y$. Thus, in both cases, $h(u,d+1) = d$.
 		If $u$ is an inner node, then, by Lemma~\ref{lemma:1}, all leaves in the subtree rooted in $u$ are $(y,d)$-unique. Thus, for the previous case, for all leaves $v$ in the subtree rooted in $u$ we have that $h(v,d+1) = d$, thus $h(u,d+1) = d$.
 	\end{proof}

  To evaluate $h(u,\ell) = \max\{ i < \ell \mid f(j+i) = y \}$ when $u$ is a leaf and such $i$ exists, we define a bitvector $b_y[1,2n]$ such that $b_y[i] =1$ only if $f(i) = y$ or $i > n$.
  We preprocess $b_y$ for $\Oh(1)$-time {\tt rank} and {\tt select} queries~\cite{clark1997compact}.
  Given node $u$ with $\sn(u) \geq \ell$, let $ j = n - \sn(u) + 1$. 
  We have that $h(u,\ell) = \max\{{\tt select}(b_y, {\tt rank}(b_y, j+\ell)) -j, -1\}$.

  \begin{example}\label{ex:h function}
		In our running example, whose suffix tree is depicted in Figure~\ref{fig:ex1}, let us consider the node $u$ corresponding to the substring {\tt aca} in the string $S^\rev$. In order to compute $h(u,9)$, we have to recursively compute the $h$ function for all children of $u$. Let $v$, $s$, and $t$ be the leaves corresponding to the $9$-th, $2$-nd, and $7$-th suffix of $S^\rev$, respectively. If we remove the dollar character from the end of the string $S^\rev$, then the $9$-th, $2$-nd, and $7$-th suffix of $S^\rev$ corresponds to the $3$-rd, $10$-th, and $8$-th prefix of $S$, respectively.
		We have that $h(s,9) = h(t,9) = 8$, since the furthest possible $y$ at distance smaller than $9$ from the $10$-th and $8$-th prefix of $S$ are beyond the end of the string. While, the furthest possible $y$ at distance smaller than $9$ from the $3$-rd prefix of $S$ is at distance $5$. Thus $h(v,9) = h(u,9) = 5$. The intuition is that the highest possible $d$, smaller than $9$ such that the substring {\tt aca} can be $(y,d)$-unique cannot be larger than $5$, since there is an occurrence of {\tt aca} that has no $y$'s at distance between $6$ and $8$.

		Let us now compute $h(u,3)$. We have that $h(s,3) = 2$, since the furthest possible $y$ at distance smaller than $9$ corresponding to the $10$-th prefix of $S$ is beyond the end of the string. For the $8$-th prefix of $S$ we have that the furthest $y$ at distance smaller than $3$ is at distance $1$, thus $h(t,3) = 1$. While, for the leaf $v$ there is no $y$ at distance smaller than $3$, thus $h(v,3) = -1$. Hence, we have that $h(u,3) = -1$.
	\end{example}

We use the $h(u,\ell)$ function in the following way, during the discovery process of all $(y,d)$-unique substrings of $S$, provided that we have already discovered all $(y,d+1)$-unique substrings of $S$. Let $\ell = d+1$ , for all nodes $u$ of ${\cal T}(S^\rev)$ we store the values $h(u,\ell)$. We discover the minimally $(y,d)$-unique substrings of $S$, finding all nodes $u$ such that $h(u,\ell) = d$. Among those, the nodes that are also left-minimal are those nodes $u$ such that, $h(\parent(u), \ell) < d$. We then check if $u$ is also right-minimal by checking if its suffix-link parent is $(y,d+1)$-unique, as in Algorithm~\ref{algo:algo2}.

The key idea of the skipping algorithm is to keep the values $h(u,\ell)$ updated during the process. Let $H(u)$ be the array that, at the beginning of the discovery of all $(y,d)$-unique substrings of $S$, stores the values $h(u,\ell)$. We want to keep the array $H$ updated in a way such that, after we discovered all $(y,d)$-unique substrings of $S$, for all nodes $u$, $H(u) = h(u,\ell-1)$.
Thus, once we discover that a node $u$ is left-minimal $(y,d)$-unique, we update the value of $H(u)= h(u,\ell-1)$.
We then update the following values:
\begin{itemize}
	\item for all nodes $v$ in the subtree rooted in $u$, we update the values $H(u) = h(u,\ell-1)$.
	\item for all nodes $p$ ancestors of $u$, we update the values $H(p) = \min(H(p), h(u,\ell-1))$
\end{itemize}

\begin{lemma}\label{lemma:nodes2}
    Given ${\cal T}(S^\rev)$, fix $d$, for all nodes $u$ of  ${\cal T}(S^\rev)$, let $H(u) = h(u,d+1)$. If for all nodes $u$ such that $H(u) = d$ we (i) set $H(u) = h(u,d)$, and (ii) for all ancestors $p$ of $u$, set $H(p) = \min\{H(p), h(u,d)\}$, then, for all nodes $u$ of ${\cal T}(S^\rev)$, $H(u) = h(u,d)$.
\end{lemma}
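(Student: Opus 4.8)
The plan is to reduce the whole statement to two structural facts about $h$ — a monotonicity property and a ``minimum over leaves'' characterization — and then to track exactly which values change when $\ell$ drops from $d+1$ to $d$. First I would record the facts. From the leaf case of the definition one checks directly that $h(w,\ell)$ is non-decreasing in $\ell$ and that $h(w,\ell)\le \ell-1$ for every leaf $w$; combined with the internal-node recurrence $h(u,\ell)=\min\{h(v,\ell)\mid v \text{ child of } u\}$, an easy induction on tree height gives
\[
h(u,\ell)=\min\{\,h(w,\ell)\mid w \text{ leaf in the subtree of } u\,\},
\]
and hence $h(u,d)\le h(u,d+1)\le d$ for every node $u$.

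Next I would isolate the leaves whose value actually changes. Write $C=\{w \text{ leaf}\mid h(w,d+1)=d\}$. Using the leaf definition one shows that $h(w,d)=h(w,d+1)$ whenever $h(w,d+1)<d$ — the candidate $i=d$ simply never attains the maximum $\max\{i<\ell\mid f(j+i)=y\}$ in that case — while for $w\in C$ the value strictly decreases, $h(w,d)<d$. Because $h(u,d+1)=\min_{w}h(w,d+1)$ and every term is $\le d$, a node satisfies $H(u)=h(u,d+1)=d$ — i.e.\ it is a \emph{trigger} of the update — exactly when every leaf in its subtree lies in $C$; in particular the trigger leaves are precisely the leaves in $C$.

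Then I would compute the value of $H(u)$ produced by the procedure. Since step (i) only replaces $H(u)=d$ by the smaller value $h(u,d)\le d$ and step (ii) applies $\min$ with fixed quantities $h(u',d)$, all updates are monotone decreases by fixed values, so the outcome is order-independent and equals
\[
H(u)=\min\Big(\mathrm{old}(u),\ \min\{\,h(u',d)\mid u' \text{ trigger in the subtree of } u\,\}\Big),
\]
where $\mathrm{old}(u)=h(u,d+1)$ and the second minimum includes $u'=u$ via step (i) when $u$ is itself a trigger. Since a trigger $u'$ in the subtree of $u$ has $h(u',d)=\min\{h(w,d)\mid w\in C,\ w \text{ in the subtree of } u'\}\ge \min\{h(w,d)\mid w\in C,\ w \text{ in the subtree of } u\}$, and the $C$-leaves are themselves triggers, the inner minimum is exactly $\min\{h(w,d)\mid w\in C,\ w \text{ in the subtree of } u\}$. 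Expanding $\mathrm{old}(u)$ by splitting the subtree leaves into $C$ and its complement, the $C$-part of $\mathrm{old}(u)$ equals $d$ while $\min_{w\in C}h(w,d)<d$, so the spurious term $d$ is absorbed and $H(u)$ collapses to $\min_{w}h(w,d)=h(u,d)$; for nodes with no $C$-leaf below them (including every unchanged leaf) there are no triggers in the subtree, so $H(u)=\mathrm{old}(u)=h(u,d)$ by the unchanged-leaf fact.

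The main obstacle is this last step: verifying that the value $d$ contributed to $\mathrm{old}(u)$ by the changed leaves is correctly superseded by the strictly smaller values propagated in step (ii), while the contributions of the \emph{unchanged} leaves survive untouched inside $\mathrm{old}(u)$. This is precisely where the monotonicity $h(u,d)\le h(u,d+1)$ and the strict drop $h(w,d)<d$ for $w\in C$ are used, and where one must ensure that an internal trigger never propagates a value smaller than the true $h(u,d)$, which is guaranteed by $h(u,d)\le h(u',d)$ for every descendant $u'$.
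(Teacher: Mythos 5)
Your proof is correct and follows essentially the same route as the paper's: both rest on the facts that $h(u,\ell)$ is the minimum of $h$ over the leaves below $u$, that a leaf's value changes in passing from $\ell=d+1$ to $\ell=d$ only if it equals $d$, and that $h(u',d)\ge h(u,d)$ for every descendant $u'$, and both conclude by splitting on whether a node is a ``trigger'' ($H(u)=d$) or not. Your write-up is in fact somewhat more careful than the paper's, notably in justifying that the updates are order-independent and in keeping explicit the contribution of the unchanged leaves inside $\mathrm{old}(u)$, points the paper's case analysis glosses over.
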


\begin{proof}
    Let $H'(u)$ be the array after all the updates. We now proceed by cases.
    If $u$ is a leaf and $H(u) = d$, then we set $H(u) = h(u,d)$. Since $u$, is a leaf, the value $H(u)$ is not modified by any other operation, thus $H'(u) = h(u,d)$. If $u$ is a leaf and $H(u) < d$, then, by definition of $h(u,d+1)$, we have that $h(u,d) = h(u,d+1)$. Since $u$ is a leaf and its value $H(u)$ is not modified by ant other operation, $H'(u) = h(u,d)$.
	 		
    If $u$ is an internal node and $H(u) = d$, then we set $H(u) = h(u,d)$. Moreover, for all nodes $v$ in the subtree rooted in $u$, we have that $H(v) = d$ and when they perform (ii), they update $H(u) = \min\{H(u),h(v,d)\} = \min\{h(u,d),h(v,d)\} = h(u,d)$ by definition of $h(u,d)$. thus, $H'(u) = h(u,d)$. Finally, if $u$ is an internal node and $H(u) <d $, then if for all nodes $v$ in the subtree rooted in $u$, $H(v) < d$, we have that $h(u,d+1) = h(u,d)$. Otherwise, let ${\cal L}$ be the set of nodes $v$ in the subtree rooted in $u$ such that $H(v) = d$. Each node $v \in {\cal L}$ change the value of $H(u)$ as $\min\{H(u), h(v,d)\}$, thus we have that $H'(u) = \min\{h(v,d) \mid v $ leaf in the subtree rooted in $u\} = h(u,d)$.
\end{proof}

To efficiently find all nodes $u$ such that $h(u,\ell) = d$ and $h(\parent(u), \ell) < d$, we use a {\em  maximum-oriented indexed priority queue}, storing the values of $H(u)$ as keys and $iBFS(u)$ as index. Under this condition, if two nodes have the same key value, then parents have higher priority than their children in $IPQ$.
We keep the priority queue updated using a {\tt demote} operation while we discover left-minimal nodes and we update the values of the array $H$ stored as keys of $IPQ$.
Algorithm~\ref{algo:lazy} shows how to compute $h(u,\ell)$ for a given node $u$, and how we update the values of the keys in the $IPQ$ for all children $v$ of $u$.

  \IncMargin{1em}
  \begin{algorithm}[tbp]
    \SetKwInOut{Input}{input}\SetKwInOut{Output}{output}
    \SetKwProg{Procedure}{procedure}{:}{end}
  	\DontPrintSemicolon
    \Input{A node $u$ in the suffix tree ${\cal T}(S^\rev)$ and integer $\ell$.}
    \Output{Maximum delay $d<\ell$ such that $L(u)^\rev$ can be $(y,d)$-unique.}
  	\LinesNumbered
  	\BlankLine
    \Procedure{$h(u,\ell)$}{
      $ min_d \gets \ell -1$\;
      \If{$u$ is a leaf}{
      $ j \gets n - \sn(u) + 1$\;
      $min_d \gets \max\{{\tt select}(b_y, {\tt rank}(b_y, j+\ell)) -j, -1\}	$\;
      }
      \Else{
      \ForAll{children $v$ of $u$}{
      $ d = h(v,\ell)$\;
      \If{$min_d < d$}{
      $ min_d \gets d$\;
      }
      }
      }
      $ IPQ.${\tt demote($u,min_d$)}\;
      \Return $min_d$\;
    }

  	\caption{{\sc Highest possible value of $d$. }}\label{algo:lazy}
  \end{algorithm}
  \DecMargin{1em}

  The skipping algorithm summarized in Algorithm~\ref{algo:mine} initializes priority queue $IPQ$ by inserting all nodes of ${\cal T}(S^\rev)$ with key $n+1$. We then repeat the following as long as there exists a node with non-negative key: extract the max element $(u,\ell)$ of $IPQ$; decide whether or not it has to be reported, i.e. if it is right-minimal; apply Algorithm~\ref{algo:lazy} to update the key values of all nodes in the subtree of $u$ and then update the values of the keys of all ancestors of $u$.

  \IncMargin{1em}
  \begin{algorithm}[tbp]
    \SetKwInOut{Input}{input}\SetKwInOut{Output}{output}
    \SetKwProg{Procedure}{procedure}{:}{end}
  	\DontPrintSemicolon
    \Input{A colored string $S$, and a color $y \in \Gamma$}
    \Output{All minimal $(y,d)$-unique substrings of $S$.}
  	\LinesNumbered
  	\BlankLine

    \ForAll{ nodes $v$ of ${\cal T}(S^\rev)$ }{
      $ IPQ$.{\tt insert($v,n+1$)} \;
    }
    \While{ $IPQ.${\tt allNegative()}$ = false$}{
      $(u,d) \gets IPQ.${\tt max()}\;
        $ (u', t) = slink(u, sd(parent(u))+1) $\;
         \If(\tcp*[f]{flag from previous round}){$u'$ is not $(y,d+1)$-unique}{\label{line:right-minimality check begin}
        	output $(d,\leftmin(u))$\label{line:output skip}\;
        }
      $ min_d = h(u, d)$\label{line:h'}\;
      \ForAll{ancestors $v$ of $u$}{
        \If{$ IPQ.${\tt keyOf($v$)} $>  min_d$}{
          $ IPQ.${\tt demote($v,min_d$)}\;
        }
      }

    }

  	\caption{{\sc Skipping}}\label{algo:mine}
  \end{algorithm}
  \DecMargin{1em}

{\em Analysis:} For all nodes $u$ in ${\cal T}(S^\rev)$, the key value associated to $u$ in $IPQ$ is initially $n+1$. Each time Algorithm~\ref{algo:mine} and Algorithm~\ref{algo:lazy} visit a node, the key value of $u$ in $IPQ$ is decreased (via {\tt demote()}) until it becomes negative. Thus, for each node we perform at most $n+1$ {\tt demote()} operations. Since the number of nodes in ${\cal T}(S^\rev)$ is linear in $n$, Algorithm~\ref{algo:mine} runs in $\Oh(n^2\log(n))$ time.

\begin{figure}
	\centering
	    \begin{subfigure}[b]{\textwidth}
	    	\includegraphics[width=\textwidth]{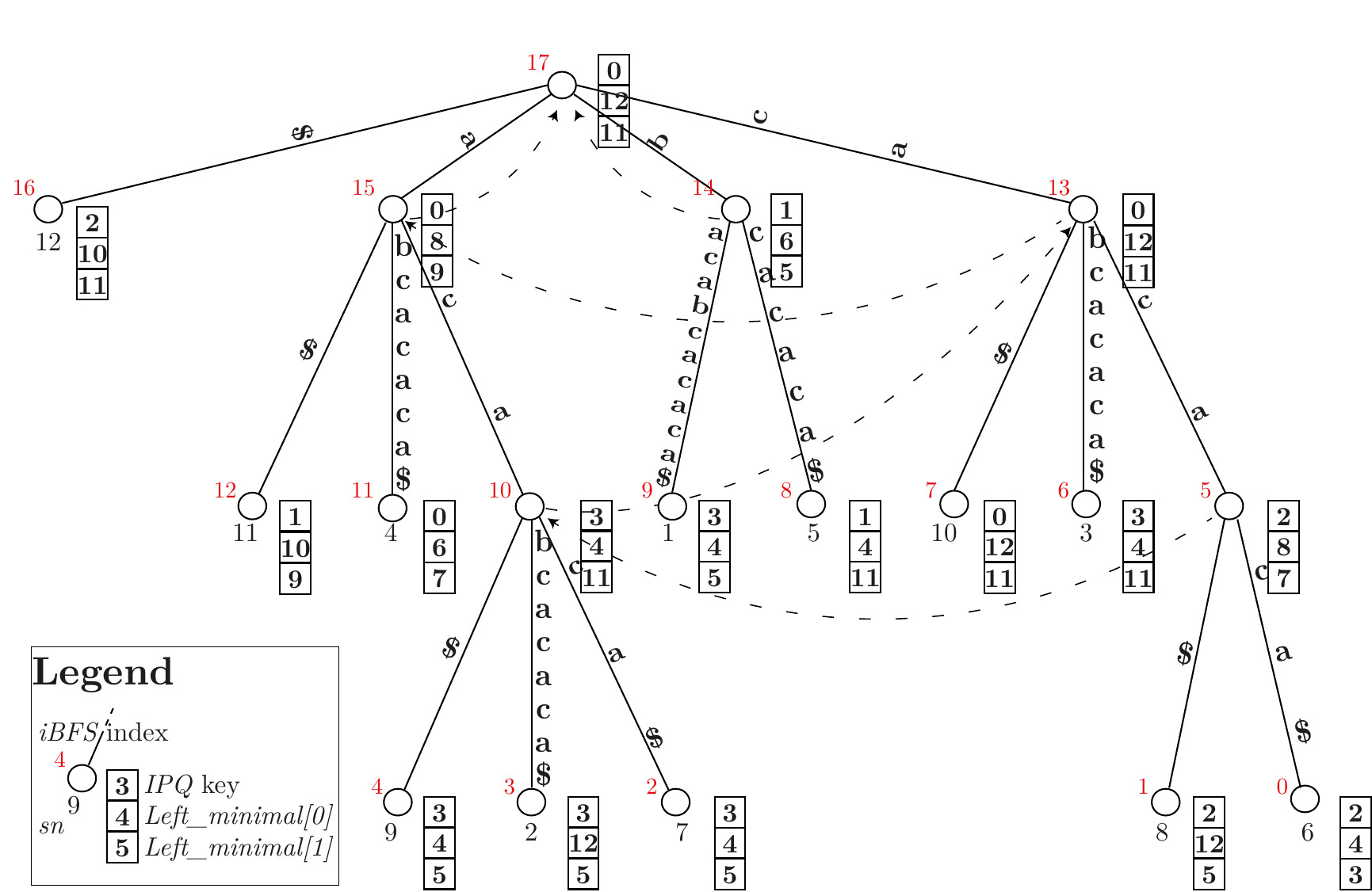}
	    	\caption{The suffix tree of ${\cal T}$ of $S^{\rev}$, reporting the values of $IPQ$, $left\_minimal[0]$, and $left\_minimal[1]$ for each node, after $48$ iterations of Algorithm~\ref{algo:lazy}.\label{fig:h-function}}
	    \end{subfigure}
	\begin{subfigure}[b]{0.45\textwidth}
		\begin{forest}
			for tree={
				rectangle,
				minimum width=1em,
				minimum height=1em,
				inner sep=1pt,
				l sep=2em,
				s sep=1em,
				draw,
				align=center,
			},
			key val/.style={
				tikz+={
					\node [anchor=mid west, red,  below=0pt of .south, font=\scriptsize]  {(#1)};
				},
			}
			[$10$,key val=3
				[$9$,key val=3
					[$3$, key val=3
						[$1$, key val=2
							[$15$, key val=0]
							[$8$, key val=1]
						]
						[$17$, key val=0
							[$7$, key val=0]
							[,phantom,]
						]
					]
					[$0$, key val=2
						[$11$, key val=0]
						[$13$, key val=0]
					]
				]
				[$6$, key val=3
					[$4$, key val=3
						[$14$, key val=1]
						[$16$, key val=2]
					]
					[$2$,key val=3
						[$12$, key val=1]
						[$5$, key val=2]
					]
				]
			]
		\end{forest}
		\caption{Indexed priority queue $IPQ$ after $48$ iterations of Algorithm~\ref{algo:lazy}.}\label{fig:impq-1}
	\end{subfigure}
	\hspace{1em}
	\begin{subfigure}[b]{0.45\textwidth}
		\begin{forest}
			for tree={
				rectangle,
				minimum width=1em,
				minimum height=1em,
				inner sep=1pt,
				l sep=2em,
				s sep=1em,
				draw,
				align=center,
			},
			key val/.style={
				tikz+={
					\node [anchor=mid west, red,  below=0pt of .south, font=\scriptsize]  {(#1)};
				},
			}
			[$9$,key val=3
				[$3$,key val=2
					[$1$, key val=2
						[$8$, key val=1
							[$15$, key val=-1]
							[$10$, key val=-1]
						]
						[$7$, key val=0
							[$17$, key val=-1]
							[,phantom,]
						]
					]
					[$0$, key val=2
						[$11$, key val=0]
						[$13$, key val=0]
					]
				]
				[$6$, key val=3
					[$16$, key val=2
						[$14$, key val=1]
						[$4$, key val=-1]
					]
					[$5$,key val=2
						[$12$, key val=1]
						[$2$, key val=1]
					]
				]
			]
		\end{forest}
		\caption{Indexed priority queue $IPQ$ after $49$ iterations of Algorithm~\ref{algo:lazy}.}\label{fig:impq-2}
	\end{subfigure}
	\caption{Top (\ref{fig:h-function}): The suffix tree of ${\cal T}$ of the reverse string $S^{\rev} = {\tt bacabcacaca}$, reporting the values of $IPQ$, $left\_minimal[0]$, and $left\_minimal[1]$ for each node, after $48$ iterations of Algorithm~\ref{algo:lazy}. In red, in the upper left of each node, we report the reverse index BFS of the node, below each leaf we report the associated suffix number, on the right of the node we report the values of $IPQ$, $left\_minimal[0]$, and $left\_minimal[1]$. Bottom: The indexed priority queue $IPQ$ after $48$~(\ref{fig:impq-1}) and $49$~(\ref{fig:impq-2}) iterations of Algorithm~\ref{algo:lazy}. In the nodes of the priority queue we have the index of the nodes of the suffix tree ${\cal T}(S^\rev)$ numbered in the reverse index BFS. Below each node, in red and in brackets, the value of the key associated to each index.}
	\label{fig:example h function}
\end{figure}

\begin{example}
	In our running example, we want to report all minimally $(y,d)$-unique substrings of the colored string, for the character $y$. Using Figure~\ref{fig:example h function}, we now show how we discover that the substring {\tt ca} is $(y,3)$-unique. We show in Figure~\ref{fig:impq-1} the indexed priority queue after $48$ iterations of Algorithm~\ref{algo:lazy}. The maximum element in the indexed priority queue $IPQ$ is the node of ${\cal T}(S^\rev)$ corresponding to the $10$-th node in the reverse index BFS of the tree, as shown in Figure~\ref{fig:h-function}. The associated key value of the maximum element is $3$, which means that the corresponding substring is left-minimal $(y,3)$-unique. In order to decide if the corresponding substring is also right-minimal, we check if the suffix link parent of the node number $10$, which is the node number $13$, is left-minimal for $d = 4$. The value of the last even value such that the node number $13$ has been left-minimal is set to $12$. Thus the node $10$ is minimally $(y,3)-unique$ and has to be reported. We now compute the $h(u,\ell)$ function for the node number $10$, $u$, and $\ell = 3$ . As shown in Example~\ref{ex:h function}, we have that the $h$ function for the nodes number $2$, $3$, and $4$ are $1$, $2$, $-1$. Thus, the value of the $h$ function for the node number $10$ is $-1$. We then update the values of all parents of the node number $10$. This results in an update of the values of the indexed priority queue $IPQ$ as reported in Figure~\ref{fig:impq-2}.
\end{example}

\subsection{Right-minimality check}
According to Proposition~\ref{prop:minimality}, in order to decide if a node is left-minimal $(y,d)$-unique, we have to check that the suffix link parent $u' = \slink(u)$ is not a left-minimal $(y,d+1)$-unique node.
Since we discover $(y,d)$-unique substrings in decreasing order of $d$,  it is enough to store, for each node, the previous value of $d$ such that the node is left-minimal.

Given a node $u$, we store this information in two arrays, indexed by the $iBFS(u)$ values. In one array we store the last even values of $d$ such that the node was left-minimal. In the other array we store the last odd values of $d$ such that the node was left-minimal. This prevents possible overwriting of information, e.g., let $v = \slink(u)$ such that $v$ is left-minimal $(y,d+1)$-unique and left-minimal $(y,d)$-unique node. Let us assume that $v$ is processed before the node $u$ that is left-minimal $(y,d)$-unique. If we had only one array holding the information of the last value of $d$ such that a node was left-minimal, then this value for $v$ would now be $d$, instead of $d+1$. Thus, we would erroneously conclude that $v$ is also right-minimal, hence that it is minimally $(y,d)$-unique. Using one array to store even values of $d$ and one array to store odd values of $d$, we avoid this problem, since $v$ updates the array associated to the parity of $d$, while $u$ queries the one associated to the parity of $d+1$.

We can replace lines~\ref{line:right-minimality check begin} to~\ref{line:output skip} with the following lines of code, where we set at the beginning $left\_minimal[b][u] = \infty$ for all $b = \{0,1\}$ and for all nodes $u$.

\medskip

\begin{quote}
	\RestyleAlgo{plain}
	\IncMargin{1em}
	\begin{algorithm}[H]
		\DontPrintSemicolon
		\LinesNumbered
		\BlankLine
		\setcounter{AlgoLine}{5}
	 $ report \gets (left\_minimal[(d+1)\bmod 2][u'] \neq d+1)$\;
	 \If{$report$}{
		 output $(d,\leftmin(u))$
	 }
	 $ left\_minimal[d\bmod 2][u] \gets d$
	\end{algorithm}
	\DecMargin{1em}
	\RestyleAlgo{ruled}
\end{quote}

\medskip

See Figure~\ref{fig:h-function} for an example of the values of the arrays $left\_minimal[0]$ and $left\_minimal[1]$.

\section{Output restrictions and algorithm improvement}\label{sec:output}

We now discuss some practically-minded output restrictions. They can be implemented as a filter to the output, thus discarding some solutions, but if they are considered as part of the problem specification, then they lead to an improvement for the skipping algorithm.

Note that our definition of $(y,d)$-unique allows that a substring occurs only once, or that none of its occurrences is followed by a $y$ with delay $d$, because they are all close to the end of string. We now restrict our attention to $(y,d)$-unique substrings with at least two occurrences followed by $y$ with delay $d$.

Given a colored string $S$, let $T$ be minimally $(y,d)$-unique. We report $(T,d)$ if and only if the following holds:

\begin{enumerate}
	\item There are at least two occurrences of $T$ in $S$.
	\item Let $i$ be the second smallest occurrence of $T$ in $S$, then $i+|T|-1+d \leq n$.
\end{enumerate}

A substring $T$ that satisfies the above conditions is called a {\em real type} minimally $(y,d)$-unique substring.
Note that any algorithm that computes all minimally $(y,d)$-unique substrings can be easily modified to output only those that are of real-type, by checking the two conditions before outputting (in line~\ref{line:output skip} of Algorithm~\ref{algo:mine}, resp.\ line~\ref{line:output base} of Algorithm~\ref{algo:algo2}): if the node $u$ is not a leaf and the value of the second largest suffix of $S^\rev$ in the subtree rooted in $u$ is greater than or equal to $d$. Since each node $u$ in the suffix tree ${\cal T}(S^\rev)$, corresponds to an interval $[i,j]$ of the suffix array of $S^\rev$, we can find the second largest suffix using a range maximum query $rMq$ data structure~\cite{FH11} built on the suffix array of $S^\rev$ (see Sec~\ref{sec:basics}).

\medskip

We now turn to the {\tt skipping} algorithm specifically, which we can modify such that it only computes real-type solutions. 
The $h(u,\ell)$ function is used in Algorithm~\ref{algo:mine} in order to find left-minimal nodes in the suffix tree. If we consider the output restrictions as part of the problem, then we do not have to report minimally $(y,d)$-unique substrings that occur only once, i.e., leaves in ${\cal T}(S^\rev)$. Then, for all nodes $u$ such that all children of $u$ are leaves, we can directly compute the highest value of $d<\ell$ such that $L(u)^\rev$ is $(y,d)$-unique. This leads to the definition of the $\mbox{{\em fast}}\_h(u,\ell)$ function for a node $u$ of ${\cal T}(S^\rev)$.
The function $\mbox{{\em fast}}\_h(u,\ell)$ is defined similarly to the function $h(u,\ell)$ with the additional following case:
\begin{itemize}
	\item If all children of $u$ are leaves, we can directly compute the highest value of $d < \ell$ such that $L(u)^\rev$ is $(y,\ell)$-unique as the largest value $d < \ell$ such that, for each child $v$ of $u$,  $h(v,d+1) = d$. In other words, we are looking for the largest $d < \ell$ such that all children of $v$ are $(y,d)$-unique.
\end{itemize}
The definition of $\mbox{{\em fast}}\_h(u,\ell)$ is as follows:

$$ \mbox{{\em fast}}\_h(u,\ell) =
\begin{cases}
	\ell -1 & \text{ if } u \text{ is a leaf and } \sn(u) < \ell,\\
	\max\{ i < \ell \mid f^\rev(\sn(u)-i) = y \} & \text{ if } u \text{ is a leaf and such } i \text{ exists, } \\
	\max\{ i < \ell \mid \mbox{{\em fast}}\_h(v,i+1) = i &\text{ if all children of } u \text{ are leaves}\\
	\phantom{\max\{ i < \ell \mid f}\text{ for all } v \text{ child of } u \} & \phantom{\text{ if all childre}}\text{and such } i \text{ exists, }\\
	\min\{ \mbox{{\em fast}}\_h(v,\ell) \mid v \text{ child of } u \} & \text{ if } u \text{ is an inner node,} \\ -1 & \text{ otherwise. }
\end{cases}$$

The additional case of $\mbox{{\em fast}}\_h(u,\ell)$ can be computed as follows. Let $u$ be a node such that all children of $u$ are leaves. We set $i = \ell$, and compute the values $h(v,i)$ where $v$ is a child of $u$. We update the value of $i = \min(i,h(v,i)+1)$, compute the value of $h(v',i)$ where $v'$ is the next child of $u$, and update the value of $i = \min(i,h(v',i)+1)$. We continue iterating until all children $v$ of $u$ have the same value $h(v,i)$, possibly $-1$. Algorithm~\ref{algo:lazy_boost} summarizes these improvements to Algorithm~\ref{algo:lazy}.
In order to use the $\mbox{{\em fast}}\_h(u,\ell)$ function in Algorithm~\ref{algo:mine}, it is enough to replace the $h()$ function at line~\ref{line:h'} by the $\mbox{{\em fast}}\_h()$ function.

\IncMargin{1em}
\begin{algorithm}[tbp]
  \SetKwInOut{Input}{input}\SetKwInOut{Output}{output}
  \SetKwProg{Procedure}{procedure}{:}{end}
  \DontPrintSemicolon
  \Input{A node $u$ in the suffix tree ${\cal T}(S^\rev)$, and integer $\ell$.}
  \Output{Maximum delay $d<\ell$ such that $L(u)^\rev$ can be $(y,d)$-unique.}
  \LinesNumbered
  \BlankLine
  \Procedure{$fast\_h(u,\ell)$}{
    $ min_d \gets \ell -1$\;
    \If{$u$ is a leaf}{
      $ j \gets n - \sn(u) + 1$\;
      $min_d \gets \max\{{\tt select}(b_y, {\tt rank}(b_y, j+\ell)) -j, -1\}	$\;
    }
    \ElseIf{all children of $u$ are leaves}{
        \Repeat{$is\_changed$ AND $min_d \geq 0$}{
        $ is\_changed \gets false$\;
          \ForAll{children $v$ of $u$}{
            $ d = fast\_h(v,min_d+1)$\;
            \If{$min_d < d$}{
              $ is\_changed \gets true$\;
              $ min_d \gets d$\;
            }
          }
        }
      }
    \Else{
      \ForAll{children $v$ of $u$}{
        $ d = fast\_h(v,\ell)$\;
        \If{$min_d < d$}{
        $ min_d \gets d$\;
        }
      }
    }
    $ IPQ.${\tt demote($u,min_d$)}\;
    \Return $min_d$\;
  }

  \caption{{\sc Highest possible value of $d$.}}\label{algo:lazy_boost}
\end{algorithm}
\DecMargin{1em}

\section{Experimental results}\label{sec:experiments}

We implemented the algorithms presented in the previous sections and measured their performance on randomly generated datasets and on real-world datasets. The implementation is available online at %
\url{https://github.com/maxrossi91/colored-strings-miner}.

\subsection{Setup}

Experiments were performed on a 3.4\,GHz Intel Core i7-6700 CPU equipped with
8\,MiB L3 cache and 16\,GiB of DDR4 main memory. The machine had no other significant
CPU tasks running, and only a single thread of execution was used.

The OS was Linux (Ubuntu 16.04, 64bit) running kernel 4.4.0. All programs were
compiled using {\tt g++} version 5.4.0 with {\tt-O3} {\tt-DNDEBUG} {\tt -funroll-loops} {\tt -msse4.2} options.
All given runtimes were recorded with the C++11 {\tt high\_resolution\_clock} time measurement facility.

\subsection{Data}

We used two different datasets; the first one consists of randomly generated data, while the second one consists of real-world data.

The randomly generated data are colored strings generated using the C library function {\tt rand()}. We varied the length $n = \num{100},\num{1000},\num{10000},\num{100000}$, the alphabet size $\sigma = \num{2},\num{4},\num{8},\num{16},\num{32}$, and the number of colors $\gamma = \num{2},\num{4},\num{8},\num{16},\num{32}$. In all cases except for $n = \num{100000}$, we used seeds $\num{0}$, $\num{9843}$, $\num{27837}$, $\num{19341}$, $\num{29044} $; for $n = \num{100000}$,  we used only seed $\num{0}$. The string is generated one character (and its color) at a time, i.e.\ fixing $\sigma$ and $\gamma$, the string of length $n = \num{1000}$ is a prefix of the string $n=\num{10000}$. The strings are generated using a uniform distribution of characters and colors.
We report only the results of experiments for the values of length $n = \num{1000},\num{10000},\num{100000}$, alphabet size $\sigma = \num{2},\num{8},\num{32}$, number of colors $\gamma = \num{2},\num{8},\num{32}$, and seed $\num{0}$, since these are representative of the trend we observed in all our experiments.

The real-world data is the result of a simulation on a set of established benchmarks in embedded systems verification~\cite{brglez1989combinational,corno2000rt,OpenCore}, reported in Table~\ref{Table:real data sets}. The benchmarks are descriptions of hardware design at the register-transfer level (RTL) of abstraction. Each design consists of a set of primary input bits (PIs) and a set of primary output bits (POs). Primary inputs and primary outputs are grouped into ports. The simulation of designs is a sequence of temporal events which act to capture the effects of the values given as inputs for the design into the design itself, and consequently the effects of the input values on the values assumed by the outputs. We simulated the benchmarks providing as inputs randomly generated sequences using an automatic test pattern generator (ATPG). The result of the simulation is collected in a simulation trace, which stores, for each temporal event, the values of the primary inputs and of the primary outputs. For each simulation event, we consider the values of all primary inputs as characters of the alphabet $\Sigma$, and the values of a port of the primary outputs as colors. In other words, for simulation event $i$, $S[i]$ is the value of the primary inputs, and $f_S(i)$ is the value of the primary outputs.

\begin{table}[tbp]
	\centering
	\begin{tabular}{llrrrrrr}
		\hline
		Design             & Description & PIs & POs & $n$ & $\sigma$ & $\gamma$ & $n_y$  \\
		\hline
		{\tt b03}             & Resource arbiter~\cite{corno2000rt} & 6 & 4 & \num{100000}   & 17 & 5 & 3210 \\
		{\tt b06}		 &  Interrupt handler~\cite{corno2000rt} & 4 & 6 & \num{100000} & 5 & 4 & \num{44259} \\
		{\tt s386}           & Shynthetized controller~\cite{brglez1989combinational} & 9 & 7 & \num{100000} & 129 & 2 & 8290\\
		{\tt camellia}		     & Symmetric key block cypher~\cite{OpenCore} & 262 & 131 & \num{103615}  & 70 & 224 & 2292 \\
		{\tt serial}		     & Serial data transmitter  & 11 & 2 & \num{100000}  & 118 & 2 & \num{16353}  \\
		{\tt master}		     & Wishbone bus master~\cite{OpenCore}  & 134 & 135 & \num{100000}  & 417 & 80 & 759 \\
		\hline
	\end{tabular}
	\caption{Real-world datasets used in the experiments. In the column {\em Design} and {\em Description} we report the name and the description of the hardware design that we used to generate the simulation trace. In column {\em PIs} we give the number of primary inputs of the design, while in {\em POs} that of its primary outputs. In column $n$ we report the length of the simulation trace, and in  columns $\sigma$ and $\gamma$ the size of the alphabet and the number of colors, respectively. For each design we fixed a color $y$, and the value $n_y$ refers to the number of $y$ characters in the simulation trace.\label{Table:real data sets}}
\end{table}

\subsection{Algorithms}

We compared the following implementations:

\begin{itemize}
	\item {\bf\tt base}: the baseline algorithm (Algorithm~\ref{algo:algo2})
	\item {\bf\tt skip}: the skipping algorithm (Algorithm~\ref{algo:mine}) using the $h$ function (Algorithm~\ref{algo:lazy})
	\item {\bf\tt real}: the skipping algorithm (Algorithm~\ref{algo:mine}) using the $fast\_h$ function (Algorithm~\ref{algo:lazy_boost})
	\item {\bf\tt base-all}: the baseline algorithm for all colors (Algorithm~\ref{algo:algo3})
\end{itemize}

All algorithms report minimally $(y,d)$-unique substrings only if they are {\em real type}. We used the {\tt sdsl-lite} library~\cite{gbmp2014sea} for compressed suffix trees, range maximum query, and rank and select supports for bit vector implementations.

\subsection{Results}

We performed all experiments five times and report the average execution time over the five runs. Experimental results are reported in Figures~\ref{fig:random_results} and~\ref{fig:real_results}, and Table~\ref{tab:properties}.

\begin{figure}[tbp]
    \centering
 	\includegraphics[width=\textwidth]{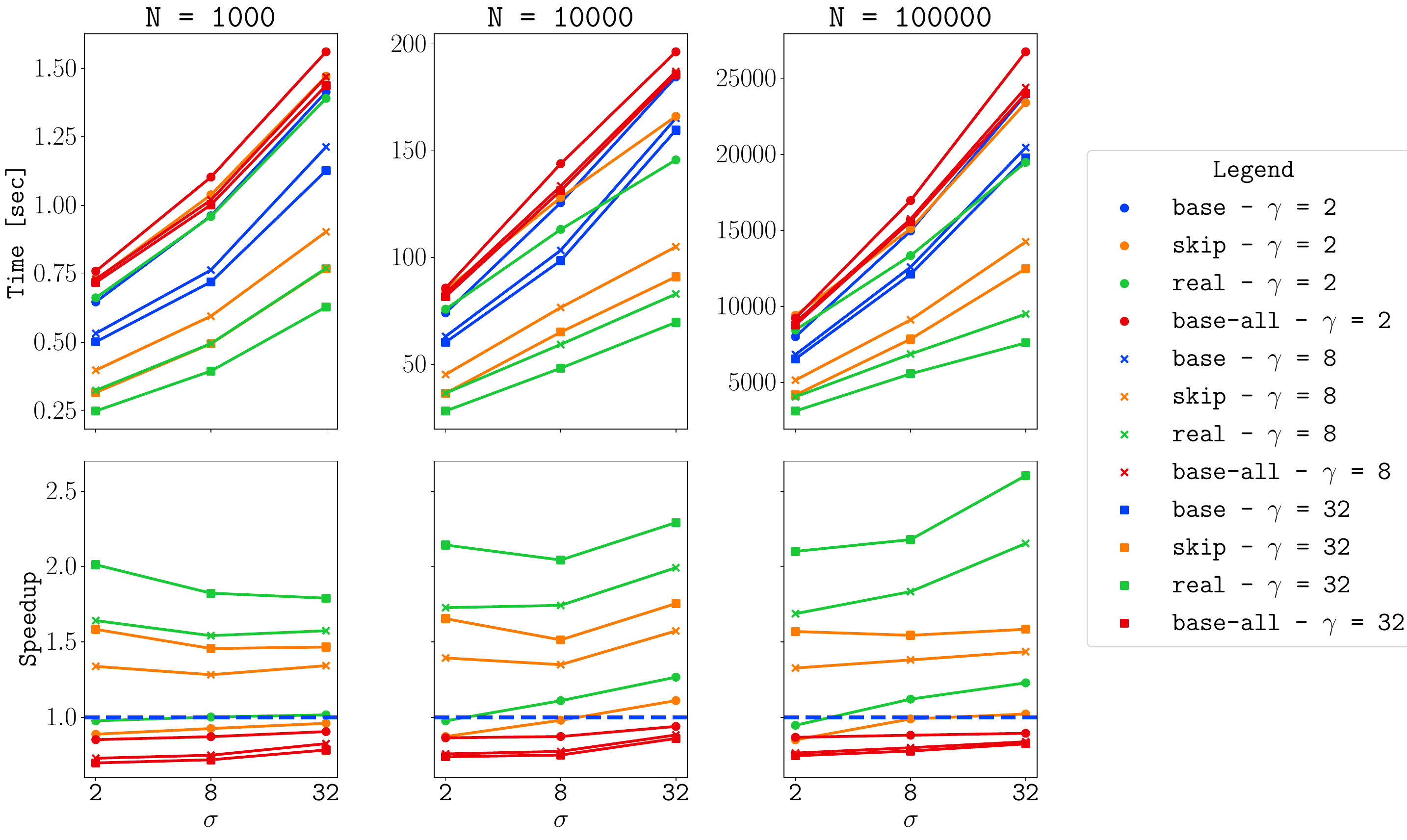}
    \caption{Results of the execution of algorithms {\tt base} (color blue), {\tt skip} (color orange), {\tt real} (color green), and {\tt base-all} (color red) over the randomly generated data for $N = 10^3$, $10^4$, and $10^5$. The $x$ axis represents the values of $\sigma = \{2,8,32\}$, and the different markers represents the values of $\gamma = \{2,8,32\}$ (circles, crosses, and boxes, respectively). The three plots in the first row report execution times. The plots in the second row report speedups of {\tt skip}, {\tt real}, and {\tt base-all} with respect to algorithm {\tt base} represented as the dashed blue line at constant $1.0$.\label{fig:random_results}\label{fig:all_colors_random}}
\end{figure}

\begin{figure}[tbp]
    \centering
 	\begin{subfigure}[b]{.49\textwidth}
 		\includegraphics[width=\textwidth]{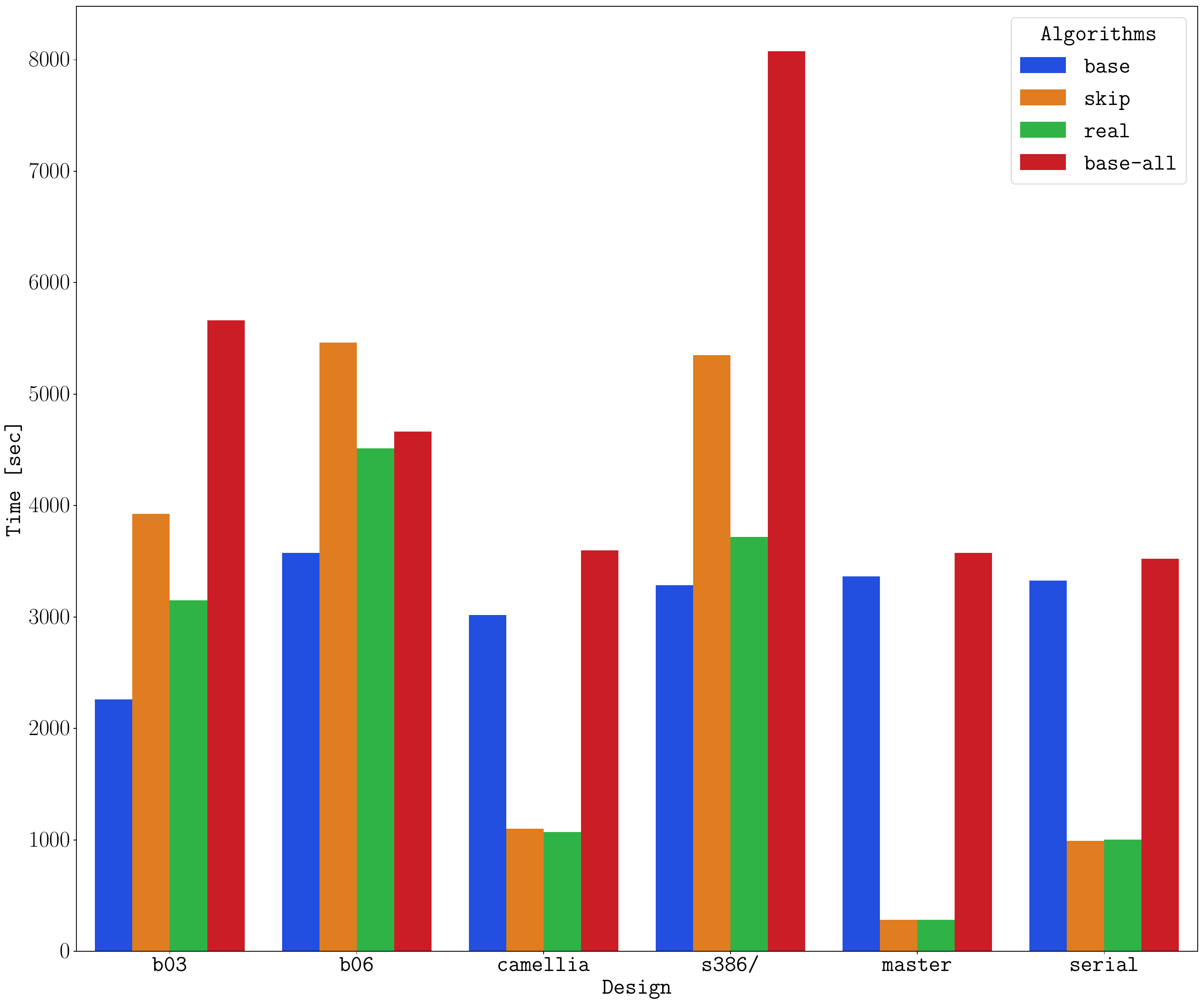}
 		\caption{Time\label{fig:real:time}}
 	\end{subfigure}
 	\hfill
 	\begin{subfigure}[b]{.49\textwidth}
 		\includegraphics[width=\textwidth]{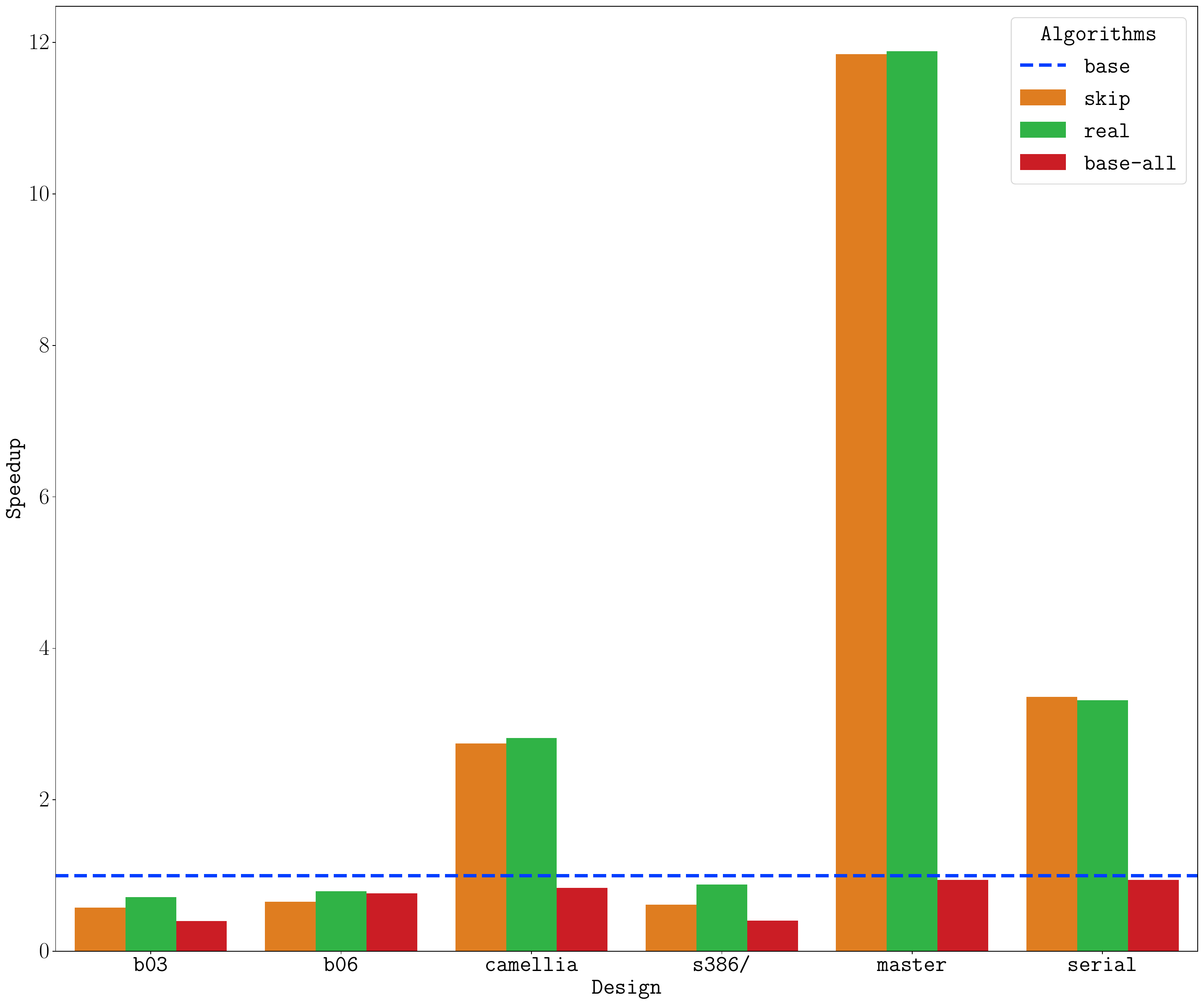}
 		\caption{Speedup\label{fig:real:speedup}}
 	\end{subfigure}
    \caption{Results of the execution of algorithms {\tt base}, {\tt skip}, {\tt real}, and {\tt base-all} over the real-world dataset. The plot in Figure~\ref{fig:real:time} reports execution times. The plot in Figure~\ref{fig:real:speedup} reports speedups of {\tt skip}, {\tt real}, and {\tt base-all} with respect to algorithm {\tt base} represented as the dashed blue line at constant $1.0$.\label{fig:real_results}\label{fig:all_colors_real}}
\end{figure}

\paragraph{Single color analysis}
Figure~\ref{fig:random_results} shows the results of the executions of {\tt base}, {\tt skip}, and {\tt real} algorithms over the randomly generated strings data. 

We can observe how the algorithms scale

\begin{enumerate}
	\item with respect to an increase in the numbers of colors, which has the effect of reducing the number of $y$-colored characters;
	\item with respect to an increase in the alphabet size; and
	\item with respect to an increase in the length of the text.
\end{enumerate}

We see that all three algorithms behave the same in all cases. Increasing the number of colors (case 1), the running time decreases. Conversely, when the size of the text alphabet increases (case 2), the running time increases also. Finally, we observe a quadratic dependence of the running time on text length (case 3); this is in accordance with our theoretic results (see Sec.~\ref{sec:baseline} and~\ref{sec:skipping}).

Figure~\ref{fig:random_results} shows that the {\tt skip} algorithm is almost always faster than the {\tt base} algorithm, and that the average speedup is \num{1.30}, with a maximum of  \num{1.75}. Moreover, we have that the the {\tt real} algorithm is almost always faster than the {\tt skip} algorithm, and the average speedup is \num{1.25}, with a maximum of  \num{1.64}. Finally, the average speedup between {\tt real} and {\tt base} is \num{1.65}, with a maximum of  \num{2.60} in the case of $N=\num{100000}$, $\sigma = 32$ and $\gamma = 32$.

 Figure~\ref{fig:real_results} shows the results for {\tt base}, {\tt skip} and {\tt real} algorithms on the real-world dataset. 
 Here, we observe a similar trend to the random data, but the speedup of {\tt real} with respect to {\tt base} is much higher --- \num{3.40} on average, with a maximum of \num{11.88} on the {\tt master} device. However, on three of the six datasets, {\tt base} is faster than {\tt skip}, and faster than {\tt real}.

\paragraph{All colors analysis}
Next, we compare the experimental results of the {\tt base} and {\tt base-all} algorithms. The setup is the same as in the previous case, i.e.\ we performed five runs of each experiment and give the average execution time. The results on the randomly generated and real-world datasets are reported in Figure~\ref{fig:all_colors_random} and~\ref{fig:all_colors_real}, respectively.

From Figures~\ref{fig:all_colors_random} and~\ref{fig:all_colors_real}, we can observe that the running time of the {\tt base} algorithm is not heavily affected by the fact that we are looking for a specific color: there is a small increase in running time from {\tt base} (reporting all real-type minimally $(y,d)$-unique substrings for any $d$ and just {\em one} color $y$), to {\tt base-all}, reporting all real-type minimally $(y,d)$-unique substrings for any $d$ and for {\em all} colors $y$.

On the real-world data, {\tt base} outperforms {\tt base-all} on all six datasets. Note that the number of patterns is considerably larger for {\tt base-all}. 

\begin{table}[t!]
	\begin{subtable}{.49\textwidth}
		\centering
        \begin{tabular}{rrrlrrr}
            \toprule
            \multicolumn{3}{r}{Alphabets}                                      &  & \multicolumn{2}{c}{Number of Properties}                         \\
            & {$\sigma$}                    & {$\gamma$}                    &  & {{\tt base}}                 & {\tt base-all}              \\ \hline
            & \num{2}                    & \num{2}                    & & \num{26894}                & \num{50922}                 \\
            &                      & \num{8}                    &  & \num{1745}                 & \num{15563}                 \\
            &                      & \num{32}                   &  & \num{76}                   & \num{3996}                  \\\\[-1em]
            & \num{8}                    & \num{2}                    &  & \num{30219}                & \num{56241}                 \\
            &                      & \num{8}                    &  & \num{1516}                 & \num{12919}                 \\
            &                      & \num{32}                   &  & \num{75}                   & \num{3306}                  \\\\[-1em]
            & \num{32}                   & \num{2}                    &  & \num{25120}                & \num{46758}                 \\
            &                      & \num{8}                    &  & \num{1245}                 & \num{10578}                 \\
            &                      & \num{32}                   &  & \num{40}                   & \num{2585}                  \\
            \bottomrule
        \end{tabular}
		\caption{Number of properties for $N=$\num{1000}.  \label{tab:all_colors_random1000}}
	\end{subtable}
	\begin{subtable}{.49\textwidth}
		\centering
        \begin{tabular}{rrrlrrr}
            \toprule
            \multicolumn{3}{r}{Alphabets}                                      &  & \multicolumn{2}{c}{Number of Properties}                         \\
            & {$\sigma$}                    & {$\gamma$}                    &  & {{\tt base}}                 & {\tt base-all}              \\ \hline
            & \num{2}                    & \num{2}                    &  & \num{2374231}              & \num{4699647}               \\
            &                      & \num{8}                    &  & \num{187202}               & \num{1447913}               \\
            &                      & \num{32}                   &  & \num{11767}                & \num{370303}                \\\\[-1em]
            & \num{8}                    & \num{2}                    &  & \num{2844680}              & \num{5607007}               \\
            &                      & \num{8}                    &  & \num{167431}               & \num{1294765}               \\
            &                      & \num{32}                   &  & \num{9989}                 & \num{317444}                \\\\[-1em]
            & \num{32}                   & \num{2}                    &  & \num{1466242}              & \num{2892791}               \\
            &                      & \num{8}                    &  & \num{83320}                & \num{642806}                \\
            &                      & \num{32}                   &  & \num{4948}                 & \num{156947}                \\
            \bottomrule
        \end{tabular}
		\caption{Number of properties for $N=$\num{10000}.  \label{tab:all_colors_random10000}}
	\end{subtable}
    \begin{subtable}{.49\textwidth}
        \centering
        \begin{tabular}{rrrlrrr}
            \toprule
            \multicolumn{3}{r}{Alphabets}                                      &  & \multicolumn{2}{c}{Number of Properties}                         \\
            & {$\sigma$}                    & {$\gamma$}                    &  & {{\tt base}}                 & {\tt base-all}              \\ \hline
            & \num{2}                    & \num{2}                    &  & \num{239039415}            & \num{473572454}             \\
            &                      & \num{8}                    &  & \num{17680770}             & \num{145246888}             \\
            &                      & \num{32}                   &  & \num{1129991}              & \num{37254203}              \\\\[-1em]
            & \num{8}                    & \num{2}                    &  & \num{279720849}            & \num{552304418}             \\
            &                      & \num{8}                    &  & \num{15517256}             & \num{127614675}             \\
            &                      & \num{32}                   &  & \num{947858}               & \num{31264100}              \\\\[-1em]
            & \num{32}                   & \num{2}                    &  & \num{243283926}            & \num{479770368}             \\
            &                      & \num{8}                    &  & \num{11982556}             & \num{98601898}              \\
            &                      & \num{32}                   &  & \num{713137}               & \num{23592691}              \\
            \bottomrule
        \end{tabular}
		\caption{Number of properties for $N=$\num{100000}.  \label{tab:all_colors_random100000}}
	\end{subtable}
	\begin{subtable}{.49\textwidth}
		\centering
		\begin{tabular}{llrr}
			\toprule
			&  & \multicolumn{2}{c}{Number of Properties}            \\
			{Design}       &  & {\tt base}        & {\tt base-all}       \\ \hline
			{\tt b03}      &  & \num{999191}            & \num{361224140}            \\
			{\tt b06}      &  & \num{223070824}         & \num{409476680}            \\
			{\tt s386}     &  & \num{5012263}           & \num{558001254}            \\
			{\tt camellia} &  & \num{77261}             & \num{2470894}              \\
			{\tt serial}   &  & \num{2085855}           & \num{11653080}             \\
			{\tt master}   &  & \num{252231}            & \num{34812555}             \\  \bottomrule
		\end{tabular}
		\caption{Number of properties for real-world dataset. \label{tab:all_colors_real}}
	\end{subtable}
	\caption{Number of properties for randomly generated strings and the real-world dataset. The first two columns of~\ref{tab:all_colors_random1000},~\ref{tab:all_colors_random1000}, and ~\ref{tab:all_colors_random1000} report the size of the text alphabet $\sigma$ and the number of colors $\gamma$, while the first column of~\ref{tab:all_colors_real} reports the name of the design where the simulation trace is retrieved. The last two columns report the number of properties extracted from the {\tt base} and the {\tt base-all} algorithms, respectively.\label{tab:properties}}
\end{table}

\section{Conclusion}\label{sec:conclusion}

We studied pattern discovery problems on colored strings
motivated by applications in embedded system verification.
To the best of our knowledge this is the first principled algorithmic treatment of these problems. 

Colored strings are strings such that each position of the string is assigned a color from a finite set of colors. We studied two different pattern discovery problems on colored strings. The first problem is to find all minimally $(y,d)$-unique substrings of the colored string, for a given color $y$ and any delay $d$. We proposed two different approaches, which we refer to as baseline approach and skipping approach. Both algorithms use a suffix tree on the reverse of the colored string as underlying data structure. They discover the patterns starting from the ones with highest delay to the ones with the lowest delay. The two algorithms differ in the way in which minimality information is propagated along the suffix tree. The baseline algorithm traverses the whole tree separately for each delay value, propagating a coloring function from the leaves to the root of the suffix tree. During each traversal, the algorithm goes through all distinct substrings of the text, and uses the coloring function to identify which substrings are minimally $(y,d)$-unique.

On the other hand, the skipping algorithm stores, for each distinct substring, the next delay value such that the substring is $(y,d)$-unique, during the discovery process. It uses a maximum-oriented indexed priority queue to find these values and to identify minimally $(y,d)$-unique substrings. 

Even though the theoretical analysis we provided for the skipping algorithms results in a worse upper bound on the running time than for the baseline algorithm, we show in our experiments that it is faster in practice on simulated data, and on half of the real-world data. %
Even though the skipping algorithm outperforms the baseline algorithm only on half of the real-world datasets (camelia, master, serial), it is significant that the gain on these is considerable, as opposed to only a slight slowdown on the others (b03, b06, s386), see Figure~\ref{fig:real_results}. Moreover, the traces on which we see a speedup are derived from devices which perform more complex tasks, thus indicating that our algorithm may be well suited for the application in  embedded system verifiaction. 

We also proposed a variant of the minimality condition oriented toward real-world application instances. Those conditions allow us to develop a faster core function of the skipping algorithm, resulting in a more effective performance in practice.

The second problem we proposed is to find all minimally $(y,d)$-unique substrings of the colored string, for {\em all} colors $y$ concurrently. We modified our baseline algorithm, defining a new coloring function, noting that for fixed $d$, a substring can be $(y,d)$-unique for at most one color $y$. The introduction of the new coloring function and the fact that now all colors $y$ are of interest, increases the running time with respect to the baseline algorithm only negligibly, an effect we observed in the experiments on both randomly generated data and real-world data.

We are currently working with colleagues in embedded systems to integrate these algorithms into their analysis workflows.

\section*{Acknowledgements}
  We thank Johannes Fischer, Travis Gagie, and Ferdinando Cicalese for interesting discussions, and Alessandro Danese for providing an updated data set of traces. We also thank the reviewers for carefully reading the paper. SJP's research was partially funded by the Academy of Finland via grant 319454. MR was partially funded by the Scuola di Dottorato dell'Universit\`a degli Studi di Verona, Italy, and is supported by the National Science Foundation (NSF) IIS (Grant No. 1618814).

\bibliography{colored}

\end{document}